\documentclass[reprint,superscriptaddress,amsmath,amssymb,aps,pre,nobibnotes]{revtex4-1}

\usepackage{inputenc}
\usepackage{amsmath}
\usepackage{amssymb, amsfonts}
\usepackage{graphicx,epstopdf,bm}
\usepackage[colorlinks,bookmarks,urlcolor=blue,citecolor=blue,linkcolor=blue]{hyperref}
\usepackage[T1]{fontenc}
\usepackage{bbm}

\renewcommand{\vec}[1]{\boldsymbol{#1}}
\newcommand{\pd}[2]{\frac{\partial #1}{\partial #2}}
\newcommand{\PD}[2]{\frac{\partial #1}{\partial #2}}
\newcommand{\D}[2]{\frac{d#1}{d#2}}

\newcommand{\abs}[1]{\left\vert #1 \right\vert}
\newcommand{\inprod}[1]{\left \langle #1 \right \rangle}
\newcommand{\ave}[1]{E[#1]}
\newcommand{\avg}[1]{\avg{#1}}

\newcommand{\lap}{\nabla^2}



\newcommand{\paren}[1]{\left(#1\right)}
\newcommand{\brac}[1]{\left[#1\right]}
\DeclareMathOperator{\prob}{Prob}
\newcommand{\barU}{\bar{U}}

\newcommand{\lt}{p_{\text{\tiny{LT}}}}
\newcommand{\st}{p_{\text{\tiny{ST}}}}
\newcommand{\lst}{\tilde{p}_{\text{\tiny{ST}}}}
\newcommand{\lG}{\tilde{G}}
\newcommand{\lR}{\tilde{R}}
\newcommand{\lambdalt}{\lambda_{\text{\tiny{LT}}}}
\newcommand{\streg}{\phi}
\newcommand{\veta}{\vec{\eta}}
\newcommand{\br}{\vec{r}}
\newcommand{\bro}{\vec{r}_{0}}
\newcommand{\brb}{\vec{r}_{\rm{b}}}
\newcommand{\ax}[2]{#1^{(#2)}}
\newcommand{\axphi}[1]{\ax{\streg}{#1}}
\newcommand{\laxphi}[1]{\ax{\tilde{\streg}}{#1}}

\newcommand{\axrho}[1]{\ax{\st}{#1}}

\newcommand{\axf}[1]{\ax{w}{#1}}
\newcommand{\khat}{\hat{k}} 

\newcommand{\reggf}{\gamma}
\newcommand{\fst}{f_{\text{\tiny{ST}}}}
\newcommand{\flt}{f_{\text{\tiny{LT}}}}

\newcommand{\fltcd}{\mathcal{F}_{\text{\tiny{LT}}}}
\newcommand{\omfree}{\Omega_{\text{free}}}
\DeclareMathOperator{\diam}{diam}
\DeclareMathOperator{\sinc}{sinc}
\def\R{\mathbb{R}} 

\newcommand{\comment}[1]{{{#1}}}

\newcommand{\psibar}{\bar{\Psi}}

\usepackage{amsthm}
\theoremstyle{plain}
\newtheorem{theorem}{Theorem}[section]
\newtheorem{lemma}{Lemma}[section]

\begin{document}

\title{Uniform asymptotic approximation of diffusion to a small target}


\author{Samuel A. Isaacson}
\affiliation{Department of Mathematics and Statistics, Boston University, 111 Cummington Mall, Boston, MA 02215}
\email{isaacson@math.bu.edu}
\author{Jay Newby}
\affiliation{Mathematical Biosciences Institute, Ohio State University, 1735 Neil Avenue, Columbus, OH 43210}
\email{newby@math.utah.edu}

\numberwithin{equation}{section}
\renewcommand{\theequation}{\arabic{section}.\arabic{equation}}


\begin{abstract}
  The problem of the time required for a diffusing molecule, within a
  large bounded domain, to first locate a small target is prevalent in
  biological modeling. Here we study this problem for a small
  spherical target. We develop uniform in time asymptotic expansions
  in the target radius of the solution to the corresponding diffusion
  equation. \comment{Our approach is based on combining expansions of
    a long-time approximation of the solution, involving the first
    eigenvalue and eigenfunction of the Laplacian, with expansions of
    a short-time correction calculated by pseudopotential approximation.}
  These expansions allow the calculation of corresponding expansions
  of the first passage time density for the diffusing molecule to find
  the target.  We demonstrate the accuracy of our method in
  approximating the first passage time density and related statistics
  for the spherically symmetric problem where the domain is a large
  concentric sphere about a small target centered at the origin.
\end{abstract}


\maketitle




\normalsize
\section{Introduction}
Diffusion of a molecule to a spherical trap is a classical problem
important in chemical kinetics.  In an unbounded domain, the problem
reduces to the Smoluchowski theory of reaction kinetics.  In the
context of biological processes, intracellular transport of
biomolecules and chemical reactions occur within closed domains with
complex geometries~\cite{bressloff13a}.  As a first passage time
problem, this is closely related to the narrow escape problem, where a
diffusing molecule escapes a closed domain through a small opening on
the boundary, and the long time behavior has been studied using
matched
asymptotics~\cite{ward93a,condamin07c,schuss07a,coombs09a,pillay10a,cheviakov10b,chevalier11a}.
There are many examples of this type of first passage time problem in
biological modeling, including transport of receptors on the plasma
membrane of a dendrite~\cite{holcman04c,bressloff07a}, intracellular
virus trafficking~\cite{lagache07b}, molecular motor
transport~\cite{bressloff11a}, binding of a transcription factor to a
segment of DNA within a nucleus~\cite{IsaacsonPNAS2011}, and export of
newly transcribed mRNA through nuclear pores~\cite{CullenTRENDS2003}.

Consider a bounded domain $\Omega\subset \R^{3}$, containing a
small, absorbing spherical trap, $\Omega_{\epsilon} \subset \Omega$,
with radius $\epsilon$ centered at $\brb \in \Omega$.  We denote by
$\partial \Omega$ the exterior boundary surface to $\Omega$, and by
$\partial \Omega_{\epsilon}$ the exterior boundary to
$\Omega_{\epsilon}$. The non-trap portion of $\Omega$ is denoted by
$\omfree = \Omega\setminus \{ \Omega_{\epsilon} \cup \partial
\Omega_{\epsilon} \}$.  Consider a molecule undergoing Brownian motion
within $\omfree$. 
We denote by $p(\br,t)$ the probability density that the molecule is
at position $\br \in \omfree$ at time $t$ and has not yet encountered
the trap.  For $D$ the diffusion constant of the molecule, $p(\br,t)$
 satisfies the diffusion equation
\begin{subequations}
\label{eq:50}
  \begin{align}
    \label{eq:1}
    \pd{p}{t} = D\nabla^{2}p(\br,t)&, \quad \br \in \omfree,   t > 0,\\
    \partial_{\veta}p(\br,t) = 0&,  \quad\br \in \partial\Omega,   t > 0,\\
    \label{eq:1c}
    p(\br,t) = 0&, \quad\br \in \partial\Omega_{\epsilon},  t > 0, \\
    \label{eq:6}
   p(\br,0) = \delta(\br-\bro)&, \quad\br \in \omfree,   \bro \in \omfree
  \end{align}
\end{subequations}
where $\partial_{\veta}$ denotes the partial derivative in the outward
normal direction, $\veta$, to the boundary.  Let $T$ label the random
variable for the time at which the molecule first reaches
$\partial\Omega_{\epsilon}$.  \comment{The first passage time
  cumulative distribution is defined as}
\begin{equation}
  \label{eq:63}
  \mathcal{F}(t) \equiv \prob[T<t] = 1 - \int_{\Omega}p(\br, t)  d\br.
\end{equation}
\comment{The solution to~\eqref{eq:50} can be written in terms of an
eigenfunction expansion with}
\begin{equation}
  \label{eq:3}
  p(\br, t) = \sum_{n=0}^{\infty}\psi_{n}(\bro)\psi_{n}(\br)e^{-\lambda_{n} t},
\end{equation}
where the eigenfunctions and eigenvalues satisfy
\begin{subequations}
\label{eq:8}
  \begin{align}
\label{eq:5}
 -D \nabla^{2}\psi_{n}(\br) = \lambda_{n}\psi_{n}&, \quad\br\in\omfree, \\
    \partial_{\veta}\psi_{n} = 0&, \quad\br \in \partial\Omega, \\
    \psi_{n}(\br) = 0&, \quad\br \in \partial\Omega_{\epsilon},
  \end{align}
\end{subequations}
\comment{and the eigenfunctions are orthonormal in
$L^2(\omfree)$.}
We order the eigenvalues so that $0<\lambda_{0}\leq \lambda_{1} \leq
\dots$.  In the limit that the radius of the trap vanishes, the
smallest eigenvalue, subsequently called the principal eigenvalue,
also vanishes (i.e., $\lambda_{0}\to 0$). Similarly, the corresponding
eigenfunction, subsequently called the principal eigenfunction,
approaches $\psi_{0}(\br) \to \frac{1}{\sqrt{\abs{\Omega}}}$ as
$\epsilon \to 0$. Corresponding to these limits, the first passage
time $T\to\infty$ and $\lim_{t\to\infty}\int_{\omfree}p(\br, t)  
d\br = 1$ as $\epsilon\to 0$. In what follows we let $\diam S$ and
$\abs{S}$ denote the diameter and volume of the set $S \subset \R^3$.
For $0 < \epsilon \ll \diam \Omega$ the asymptotics of the principal
eigenvalue are known, and given by $\lambda_{0} \sim \frac{4\pi D
}{\abs{\Omega}}\epsilon$~\cite{ward93a} (see
also~\cite{WardCheviakov2011}).  Note that to first order in
$\epsilon$, $\lambda_{0}$ depends only on the volume of $\Omega$ and
not the domain geometry. Higher order terms which depend on other
properties of the domain are discussed in the next section.

The small $\epsilon$ asymptotics of $\lambda_0$ motivate a large-time
approximation of $p(\br,t)$, based on a separation of time scales.
Truncating the eigenfunction expansion~\eqref{eq:3} after the first
term gives the long time approximation,
\begin{equation}
  \label{eq:47}
  p(\br, t) \sim \frac{1}{\abs{\Omega}}e^{-\frac{4\pi D}{\abs{\Omega}} \epsilon t}, 
  \quad \lambda_{1}t \gg 1,\quad \epsilon \ll \diam \Omega.
\end{equation}
Note, however, that the initial condition~\eqref{eq:6} is not
satisfied by this expansion.  Instead, the initial condition is
modified so that the molecule starts from a uniformly distributed
initial position with
\begin{equation}
  \label{eq:10}
  p(\br, 0) = \frac{1}{\abs{\Omega}}.
\end{equation}
In other words, the long time behavior depends very little on the
initial position of the molecule because it is likely to explore a
large portion of the domain before locating the trap.  
The first passage time density is $ f(t) \equiv \frac{d}{dt}\mathcal{F}(t)$, where $\mathcal{F}(t)$ is given by~\eqref{eq:63}.
The long-time, $\lambda_1 t \gg 1$, approximation of the first passage time density is then
\begin{subequations}
  \begin{align}
  \label{eq:62}
  f(t) \sim \lambda_{0} e^{-\lambda_{0} t}&, \quad    \text{for } \lambda_1 t \gg 1,\\    
  \sim \frac{4 \pi D \epsilon }{\abs{\Omega}} e^{-\frac{4 \pi D}{\abs{\Omega}} \epsilon t}&, \quad \text{for } \epsilon \ll \diam \Omega.
  \end{align}
\end{subequations}
The first passage time is therefore approximately an exponential
random variable, with mean
\begin{subequations}
  \begin{align}
  \label{eq:7}
  \ave{T} \sim \frac{1}{\lambda_{0}}&, \quad  \text{for }\lambda_{1}t \gg 1, \\
  \sim \frac{\abs{\Omega}}{4 \pi D \epsilon}&, \quad\text{for }\epsilon \ll \diam \Omega.
  \end{align}
\end{subequations}
An exponentially-distributed first passage time is an important
assumption in course-grained models, such as the reaction-diffusion
master equation
(RDME)~\cite{GardinerHANDBOOKSTOCH,VanKampenSTOCHPROCESSINPHYS,GardinerRXDIFFME}.
(The RDME is a lattice stochastic reaction diffusion model which
assumes that reacting chemicals are well mixed within a computational
voxel.)  More broadly, exponential waiting times are essential for jump
processes to be Markovian.

The above long-time approximation motivates several questions.  First,
when is the non-exponential, short-time behavior of the first passage
time important?  Second, how does changing the initial position of the
molecule effect the approximation?  It follows from \eqref{eq:7} that
the mean binding time is approximately independent of the initial
position.  On the other hand, as we show here the most likely binding
time, called the mode, depends strongly on the initial position.
\comment{Recently, the importance of the initial position in first
  passage times in confined domains has been studied in the context of
  chemical reactions \cite{Benichou2010gb}, and shown to play a role
  in quantifying the difference between two or more identically
  distributed first passage
  times~\cite{mejia-monasterio11a,mattos12a}}.  More generally, to
estimate spatial statistics for the position of the diffusing molecule
it is necessary to obtain expansions of not just the first passage
time density, $f(t)$, but also the solution to the diffusion equation,
$p(\br,t)$.

\comment{The first passage time problem in a confined domain has also
  been studied from the perspective of a continuous time random walk
  (CTRW) on a finite graph of size $N$~\cite{Benichou2011a}.  Meyer
  and coworkers obtain exact results for the Laplace transform, which
  is the moment generating function for the first passage time
  distribution, and expand the moments for large $N$.  They then
  reconstruct the large $N$ expansion of the first passage time
  distribution from the moments.  These results can also be
  interpreted as an approximation of the first passage time
  distribution in the large volume limit.  This perspective is closely
  related to the one considered here; instead of an expansion in large
  volume, we assume the domain volume is $O(1)$ and expand in terms of
  the small radius of the target.}

Motivated by these and other examples, we develop a uniform in time
asymptotic approximation as $\epsilon \to 0$ of the probability
density, $p(\br,t)$ (see~\eqref{eq:diffDensExpansion}), and the first
passage time density, $f(t)$ (see~\eqref{eq:13}), that accounts for
non-exponential, small time behavior and the initial position of the
molecule.  The paper is organized as follows.  \comment{In
  Section~\ref{sec:unif-asympt-appr} we further develop the long time
  approximation and present the complimentary short time
  \emph{correction} based on a pseudopotential approximation.  Adding
  these two estimates we derive a uniform in time asymptotic expansion
  of $p(\br,t)$ for small $\epsilon$. It must be emphasized that what
  we call the ``short time'' correction is \emph{not} an asymptotic
  expansion of $p(\br,t)$ as $t \to 0$, but instead is a correction
  that when added to the long time expansion for any fixed $t$ and
  $\br$ gives an asymptotic expansion of $p(\br,t)$ in $\epsilon$.}
In Section~\ref{sec:first-passage-time} we use the results of
Section~\ref{sec:unif-asympt-appr} to derive a small $\epsilon$
expansion of the first passage time density (through terms of order
$O(\epsilon^2)$).  Finally, in Section~\ref{sec:spher-trap-conc} these
approximations are compared to the exact solution, exact first passage
time density, and several other statistics for a spherical trap
concentric to a spherical domain.

\section{Uniform asymptotic approximation}
\label{sec:unif-asympt-appr}
\comment{ Our basic approach is to first split $p(\br,t)$ into two
  components: a large time approximation that will accurately describe
  the behavior of $p(\br,t)$ for $\lambda_1 t \gg 1$, and a short time
  correction to this approximation when $\lambda_1 t
  \not\gg 1$. Note, both are defined for all times, but the latter
  approaches zero as $t \to \infty$, and so only provides a
  significant contribution for $\lambda_1 t \not \gg 1$. It should be
  stressed that the short time correction is not an asymptotic
  approximation of $p(\br,t)$ as $t \to 0$, but instead serves as a
  correction to the long time expansion for $\lambda_1 t \not \gg 1$.
}  We write $p(\br,t)$ as
\begin{equation}
  \label{eq:46}
  p(\br,t) = \lt(\br, t) + \st(\br,t),
\end{equation}
\comment{where $\lt$ is the large time approximation and $\st$ is the
  short time correction.  We will take $\lt = \psi_0(\br) \psi_0(\bro)
  \exp\brac{-\lambda_0 t}$ to be the long time approximation of the
  eigenfunction expansion~\eqref{eq:3} of $p(\br,t)$. With this
  choice, $\lt$ and $\st$ satisfy the projected initial conditions}
\begin{align}
  \lt(\br,0) &= \inprod{\psi(\br),\delta(\br-\bro)}\psi(\br) = \psi(\br)\psi(\bro), \label{eq:31}
\\
 \st(\br,0) &=  \delta(\br-\bro)-\psi(\br)\psi(\bro). \label{eq:49}
\end{align}
Here we have dropped the subscript and subsequently identify $\psi$
and $\lambda$ as the principal eigenfunction and eigenvalue
respectively.  Using~\eqref{eq:31} and~\eqref{eq:49} as initial
conditions, and setting $t=0$ in~\eqref{eq:46}, then gives $p(\br,t)
= \delta(\br-\bro)$ as required.

In the next two sections we derive asymptotic expansions of $\lt$ and
$\st$ for $\epsilon \ll \diam \Omega$. The expansion of $\lt$ is based
off the principal eigenvalue and eigenfunction expansions developed
in~\cite{WardCheviakov2011}. The expansion of $\st$ adapts the
pseudopotential method we first used in~\cite{IsaacsonRDMELimsII},
where uniform in time expansions of $p(\br,t)$ and the \comment{first
  passage time cumulative distribution}, $\prob \brac{T < t}$, were
obtained for $\Omega = \R^3$ and $\brb$ the origin. We have found that
a direct pseudopotential approximation of~\eqref{eq:50} in bounded
domains with Neumann boundary conditions breaks down for large, but
finite times.  For example, the direct pseudopotential based expansion
of $\prob \brac{T < t}$ can become negative for large times.
This inaccuracy in the pseudopotential approximation 
  arises from the non-zero steady state solution to the limiting
  $\epsilon=0$ equation.  As we see in
  Section~\ref{sec:unif-asympt-appr-short-time}, by projecting out the
  principal eigenfunction this problem is removed when expanding
  $\st$.  This motivated our use of the splitting $p = \lt + \st$.

\subsection{Large time asymptotic expansion}
\label{sec:unif-asympt-appr-long-time}
Since the initial condition~\eqref{eq:31} is an eigenfunction of the
Laplacian, the long time density is given by
\begin{equation*}
  \lt(\br,t) = \psi(r) \psi(r_0) e^{-\lambda t}.
\end{equation*}
As discussed in the Introduction, there are well known asymptotic
approximations for small $\epsilon$ of $\psi(r)$ and $\lambda$. These
then determine the small $\epsilon$ behavior of $\lt(\br,t)$. 
The expansions of $\psi(r)$ and $\lambda$ are typically given in
terms of the corresponding no-trap problem where $\epsilon = 0$.
Let $G(\br,\br',t)$ denote the fundamental solution to the diffusion
equation in $\Omega$ (\textit{i.e.} the $\epsilon = 0$ problem), then
\begin{align} 
  \label{eq:diffGreensFunct}
  \pd{}{t} G(\br, \br', t) &= D \nabla^{2} G(\br, \br', t), \quad  \br \in \Omega,\\
  G(\br, \br', 0) &= \delta(\br-\br'), \quad  \br \in \Omega, \notag
\end{align}
with the no-flux Neumann boundary condition
\begin{equation}
  \partial_{\veta} G(\br, \br', t)  = 0, \quad  \br \in \partial \Omega.
\end{equation}
We will also need the corresponding solution to the time-independent
problem, the pseudo-Green's function $U(\br,\br')$, satisfying
\begin{equation}
  \label{eq:27}
  D \nabla^{2} U(\br,\br') = \frac{1}{\abs{\Omega}} - \delta(\br-\br'), \quad  \br \in \Omega,
\end{equation}
with the no-flux Neumann boundary condition
\begin{equation}
  \label{eq:28}
  \partial_{\veta} U(\br,\br') = 0, \quad  \br \in \partial \Omega,
\end{equation}
\comment{and the normalization condition
\begin{equation} \label{eq:28b}
  \int_{\Omega} U(\br,\br')   d \br = 0.
\end{equation} 
Within the derivative terms in~\eqref{eq:diffGreensFunct} we can
replace $G(\br,\br',t)$ by $G(\br,\br',t) - \abs{\Omega}^{-1}$.
Integrating the resulting equation in $t$ on $(0,\infty)$, and using
the uniqueness of the solution to~\eqref{eq:27} with the boundary
condition~\eqref{eq:28} and the normalization~\eqref{eq:28b}, we find
}
\begin{equation}
  \label{eq:29}
  \int_{0}^{\infty} \left( G(\br, \br', t) - \frac{1}{\abs{\Omega}}\right)   dt = U(\br, \br').
\end{equation}
Here the term $\abs{\Omega}^{-1}$ is necessary to guarantee convergence
of the integral. Finally, we denote by $\reggf$ the value of the
regular part of $U(\br,\brb)$ at $\br = \brb$,
\begin{equation}
  \label{eq:45}
  \reggf  \equiv \lim_{\br \to \brb}\left[  U(\br,\brb) - \frac{1}{4 \pi D \abs{\br-\brb}}\right].
\end{equation}

Let $\khat = 4 \pi D$. As derived in~\cite{WardCheviakov2011}, the
asymptotic expansions of the principal eigenvalue and eigenfunction
for small $\epsilon$ are
\begin{equation}
  \label{eq:55}
  \lambda \sim \lambdalt \equiv \frac{\khat}{\abs{\Omega}}\left( 1 -  \khat \reggf\epsilon\right)\epsilon,
\end{equation}
and
\begin{align}
  \psi(\br) &\sim \frac{1}{\sqrt{\abs{\Omega}}} + \epsilon   \psi^{(1)}(\br)+\epsilon^{2}  \psi^{(2)}(\br) \notag \\
  & \begin{aligned} \label{eq:48}
    =\frac{1}{\sqrt{\abs{\Omega}}} &\Bigg[ 1 - \epsilon \khat U(\br,\brb )  - \epsilon^2 \khat^2 \Bigg( -\reggf U(\br,\brb ) \\
      &+ \frac{1}{\abs{\Omega}}\int_{\Omega}U(\br,\br')U(\br',\brb)   d\br' \Bigg)\Bigg] + \epsilon^2 \psibar.
  \end{aligned}
\end{align}
Here $\psibar$ denotes the spatial average of the second order term
and is given by~\cite{WardCheviakov2011}
\begin{equation} \label{eq:princEfuncNormalize}
\psibar = -\frac{\khat^2}{2 \abs{\Omega}^{\frac{3}{2}}} \int_{\Omega} \paren{U(\br,\brb)}^2   d \br. 
\end{equation}
Note, the second order term in~\eqref{eq:48} is not explicitly derived
in~\cite{WardCheviakov2011}, but can be found by solving
equation~(2.20) (of~\cite{WardCheviakov2011}) with the
normalization~\eqref{eq:princEfuncNormalize}.  The corresponding
expansion of the initial condition, $\lt(\br,0)$, in $\epsilon$ is
\begin{equation*}
  \lt(\br,0) \sim \frac{1}{\abs{\Omega}} + \axf{1}(\br,\bro)   \epsilon + \axf{2}(\br,\bro)   \epsilon^{2},
\end{equation*}
where the functions $\axf{n}(\br,\bro)$ are obtained from substituting
the asymptotic approximations of the principal eigenfunction and
eigenvalue into~\eqref{eq:31} and collecting terms in $\epsilon$.
We find that 
\begin{equation}
  \label{eq:f1}
  \axf{1}(\br, \bro)   = -\frac{\khat}{\abs{\Omega}} \Big( U(\br,\brb) + U(\bro,\brb) \Big),
\end{equation}
\begin{equation}
  \label{eq:f2}
  \begin{split}
    \axf{2}(\br,\bro) &= \frac{2 \psibar}{\sqrt{\abs{\Omega}}} + \frac{\khat^2}{\abs{\Omega}} U(\br,\brb) U(\bro,\brb) \\
    &\quad+ \frac{\khat^2 \reggf }{\abs{\Omega}} \left[U(\br,\brb) + U(\bro,\brb)\right] \\
    &  - \frac{\khat^2}{\abs{\Omega}^2} \int_{\Omega} \left[U(\br,\br') + U(\bro,\br')\right] U(\br',\brb)   d \br' ,
  \end{split}
\end{equation}
so that the small $\epsilon$ expansion of $\lt(\br,t)$ is then
\begin{equation} \label{eq:ltDensityExpan}
  \lt(\br,t) \sim
  \brac{ \frac{1}{\abs{\Omega}} + \axf{1}(\br,\bro)   \epsilon + \axf{2}(\br,\bro)   \epsilon^{2} }
  e^{-\lambdalt t}.
\end{equation}

\subsection{\comment{Short time correction asymptotic expansion}}
\label{sec:unif-asympt-appr-short-time}
To construct an asymptotic approximation to $\st(\br,t)$ for small
$\epsilon$, \comment{we replace the trap boundary condition 
by a sink term in the PDE involving a Fermi pseudopotential
operator~\cite{FermiPseudoPot,HuangYangPhysRev1957},
subsequently denoted by $V$}. The boundary condition
$\st(\br,t) = 0$ for $\br \in
\partial \Omega_{\epsilon}$ is replaced by the sink term \comment{
\begin{multline}
  \label{eq:57a}
  -V \st(\br,t) \equiv -\epsilon \khat \pd{}{\abs{\br-\brb}} \Big[\abs{\br - \brb} \st(\br,t) \Big]_{\br=\brb}\\
\times\delta(\br-\brb)
\end{multline}
For $r = \abs{\br}$, in the special case that $\brb = \vec{0}$ is the
origin, this reduces to
\begin{equation}
  \label{eq:57}
  -V \st(\br,t) \equiv -\epsilon  \khat \pd{}{r} \brac{r \st(\br,t)}_{\br = \vec{0}} \delta(\br).
\end{equation}

Before we proceed with the pseudopotential approximation, it is
instructive to consider why for an equivalent problem in 1D, replacing
the absorbing boundary condition with a sink term makes the problem
easier to solve.  To see how this idea breaks down in higher
dimensions, and to motivate the pseudopotential operator, we apply the
Laplace transform to \eqref{eq:1} (with $p$ replaced by $\st$ and the
initial condition modified to~\eqref{eq:49}). We replace the Dirichlet
boundary condition~\eqref{eq:1c} by a delta function absorption term
on the right hand side. If $\lst(\br,s)$ denotes the Laplace transform
of $\st(\br,t)$, then we find
\begin{multline}
  \label{eq:17}
  -D \lap \lst + s \lst  = \delta(\br - \bro) - \psi(\br) \psi(\bro)\\
 - C\delta(\br - \brb) \lst,
\end{multline}
so that absorption by the target occurs when the center of the target
is reached (at some rate $C$ that is to be determined).  Using the
Green's function of the diffusion equation~\eqref{eq:diffGreensFunct},
we can write the solution as
\begin{multline}
  \label{eq:18}
  \lst(\br, s) = \tilde{G}(\br, \bro, s) - \psi(\bro)\int_{\Omega}\tilde{G}(\br, \br', s)\psi(\br')d\br'\\
 - C\lst(\brb, s)\tilde{G}(\br, \brb, s).
\end{multline}
To solve the above equation, we need only take the limit $\br \to
\brb$ and solve for $\lst(\brb, s)$.  However, we observe that for
dimensions greater than one $\lim_{\br\to\brb}\tilde{G}(\br, \brb, s)
= \infty$, which is why the naive approach breaks down.  There are a
few different methods for adapting this idea to work in higher
dimensions, namely matched asymptotics \cite{ward93a} and pseudo
potential operators, which is the approach that we use here.}

\comment{The pseudopotential operator, $V$, was developed so that the operator
  \begin{equation*}
    D \lap - V
  \end{equation*}
  on $\Omega$ provides an asymptotic approximation in $\epsilon$ of $D
  \lap$ on $\omfree$ with a zero Dirichlet boundary condition on
  $\partial \Omega_\epsilon$~\cite{HuangYangPhysRev1957}.  It was
  originally constructed for approximating hard core potentials in
  quantum mechanical scattering
  problems~\cite{FermiPseudoPot,HuangYangPhysRev1957}, but has also
  been used in the estimation of diffusion-limited reaction rates in
  two-dimensional periodic systems~\cite{GoldsteinPseudoPotent}. The
  operator was derived in~\cite{HuangYangPhysRev1957} by expanding the
  eigenfunctions~\eqref{eq:8}, $\psi_n(\br)$, in a basis of spherical
  harmonics and then analytically continuing the domain of definition
  of each eigenfunction into the interior of the sphere,
  $\Omega_{\epsilon}$.  On $\Omega$, it was found that formally
\begin{equation*}
  D \lap \psi_n(\br) + \lambda_n \psi_n(\br) = V \psi_n(\br) + O(\epsilon^3).
\end{equation*}
When $\Omega = \R^3$, it has been shown that the asymptotic expansion
for small $\epsilon$ of the solution to the diffusion equation with
pseudopotential interaction agrees with the direct asymptotic
expansion in $\epsilon$ of the exact solution to the diffusion
equation with a zero Dirichlet boundary condition, $p(\br,t) = 0$ for
$\br \in
\partial \Omega_{\epsilon}$, up through terms of order
$O(\epsilon^2)$~\cite{IsaacsonRDMELimsII}.}

The pseudopotential approximation for $\st(\br,t)$ is that
\begin{equation}
  \label{eq:ppModelEq}
  \pd{}{t} \st(\br,t) = D \nabla^{2} \st(\br,t) - V \st(\br,t)
\end{equation}
for $\br \in \Omega$, with the initial condition~\eqref{eq:49} and a
no-flux Neumann boundary condition on $\partial \Omega$. \comment{
  In~\cite{AlbeverioSolvabModelBook,AlbeverioJFunctAnal,AlbeverioSingPerturbBook,DellAntDiffMovPtSrc}
  several approaches are developed for rigorously defining
  pseudopotential interactions (usually called point interactions
  or singular perturbations of the Laplacian in those works).
  Following these works, in particular~\cite{DellAntDiffMovPtSrc,
    IsaacsonRDMELimsII}, we split $\st(\br, t)$ into a regular part,
  $\streg(\br,t)$, and a singular part, $q(t) U(\br,\brb)$, so that
\begin{equation}
  \label{eq:33}
  \st(\br,t) = \streg(\br,t) + q(t) U(\br,\brb).
\end{equation}
Here it is assumed that $\streg(\br,t)$ is ``nice'' as $\br \to \brb$.
In Appendix~\ref{ap:ppSolutRep} we give a more detailed motivation for
this representation.}

To find the asymptotic expansion of $\st(\br,t)$ for small $\epsilon$,
we begin by formulating a closed integral equation for
$\streg(\br,t)$. \comment{As described in~\cite{WardCheviakov2011}, we
  can separate $U(\br,\brb)$ into a component that is regular at $\br
  = \brb$, denoted by $R(\br,\brb)$, and a singular part,
  $\khat^{-1} \abs{\br - \brb}^{-1}$, so that
\begin{equation*}
  U(\br,\brb) = R(\br,\brb) + \frac{1}{\khat \abs{\br - \brb}}.
\end{equation*}
Note that the pseudopotential applied to the singular part of
$U(\br,\brb)$ is zero. The action of the pseudopotential on the
representation~\eqref{eq:33} is therefore
\begin{align*}
  V \brac{ \phi(\br,t) + q(t) U(\br,\brb)} &= \epsilon\khat \brac{ \streg(\brb,t) + \reggf q(t)} \delta(\br-\brb),
\end{align*} 
as $\gamma = R(\brb,\brb)$ by~\eqref{eq:45}.  Substituting the
representation~\eqref{eq:33} of $\st(\br,t)$
into~\eqref{eq:ppModelEq}, we find}
\begin{multline} 
  \label{eq:derivingPhiEq1}
  \pd{\streg}{t} = D \nabla^{2} \streg - \frac{dq}{dt} U(\br,\brb) + q(t)\left( \frac{1}{\abs{\Omega}} 
    - \delta(\br-\brb)\right) \\
  - \epsilon\khat\left( \streg(\brb,t) + \reggf q(t) \right)\delta(\br-\brb).
\end{multline}
We enforce the point boundary condition that the delta function terms
should cancel~\cite{DellAntDiffMovPtSrc,IsaacsonRDMELimsII} so that
\begin{equation} \label{eq:ptBndCondit}
  q(t) = -\frac{\epsilon\khat}{1 + \epsilon\khat \reggf } \streg(\brb,t).
\end{equation}
\comment{After substituting \eqref{eq:ptBndCondit} into \eqref{eq:33} and rearranging terms we find that
\begin{multline}
  \label{eq:19}
  \st(\br, t) = \left(1 - \frac{\epsilon\khat}{1 + \epsilon\khat \reggf }U(\br, \brb) \right)\streg(\brb, t)\\
 + (\streg(\br, t) - \streg(\brb, t)).
\end{multline}
If the starting position is close to the target, the last term on the
right hand side is expected to be small.  It follows that space and
time are approximately decoupled, which is consistent with the results
of the CTRW approach found in \cite{Benichou2011a}.
} 

By~\eqref{eq:ptBndCondit}, equation~\eqref{eq:derivingPhiEq1}
simplifies to
\begin{equation} \label{eq:phiEq}
  \pd{\streg}{t}  = D \nabla^{2} \streg - \frac{dq}{dt} U(\br,\brb) + \frac{1}{\abs{\Omega}} q(t).
\end{equation}
Using Duhamel's principle we find that 
\begin{align}
  \label{eq:phiIntEq1}
  \streg(&\br,t) =  \int_{\Omega}G(\br,\br',t)\streg(\br',0)   d\br'  \notag \\
&- \int_{0}^{t} \int_{\Omega} G(\br,\br',t-s) \left( \frac{dq}{ds}U(\br',\brb) - \frac{q(s)}{\abs{\Omega}} \right)  d\br' ds .
\end{align}
Integrating by parts we find 
\begin{multline}
  \int_{0}^{t} G(\br,\br',t-s) \frac{dq}{ds} ds 
  = \int_{0}^{t} D \nabla^{2} G(\br,\br', t-s) q(s) ds \\
  + q(t)\delta(\br-\br')  - q(0) G(\br,\br',t),
\end{multline}
while the no-flux boundary condition implies
\begin{align}
 & \int_{\Omega} D \nabla^{2} G(\br,\br',t-s) U(\br',\brb)   d \br' \\
  & \quad= \int_{\Omega} G(\br,\br',t-s) D \nabla^{2} U(\br',\brb)    d \br' \\
  &\quad= \int_{\Omega} G(\br,\br',t-s) \left(\frac{1}{\abs{\Omega}} - \delta(\br'-\brb)\right)    d\br' \\
  &\quad= \frac{1}{\abs{\Omega}} - G(\br,\brb,t-s).
\end{align}
Using the two preceding identities, it follows
that~\eqref{eq:phiIntEq1} simplifies to
\begin{align}
  \label{eq:24b}
  \streg(\br,t) &=  -q(t) U(\br,\brb) + \int_{0}^{t}G(\br,\brb, t-s) q(s)   ds \notag \\
  &\phantom{-}\quad + \int_{\Omega}G(\br,\br',t) \brac{ \streg(\br',0) +q(0) U(\br',\brb)}   d\br'.
\end{align}
Eliminating $q(t)$ with the point boundary
condition~\eqref{eq:ptBndCondit} gives
\begin{multline}
  \label{eq:24}
  \streg(\br,t) = \int_{\Omega}G(\br,\br',t)\st(\br',0)   d\br' \\
+  \frac{\khat \epsilon}{1 + \reggf \khat \epsilon} \bigg[U(\br,\brb)\streg(\brb ,t)\\
\left. - \int_{0}^{t}G(\br,\brb, t-s)\streg(\brb ,s)   ds\right].
\end{multline}

We now use the integral equation~\eqref{eq:24} to find an asymptotic
expansion of $\st(\br,t)$ in $\epsilon$. Let
\begin{equation*}
  \st(\br,t) \sim \axrho{0}(\br,t) + \axrho{1}(\br,t)   \epsilon + \axrho{2}(\br,t)   \epsilon^2.
\end{equation*}
Similarly, we define the expansion of the regular part of $\st(\br,t)$ by
\begin{equation*}
  \streg(\br,t) \sim \axphi{0}(\br,t) + \axphi{1}(\br,t)   \epsilon + \axphi{2}(\br,t)   \epsilon^2.
\end{equation*}
Using~\eqref{eq:ptBndCondit} we identify the expansion terms of
$\st(\br,t)$ as
\begin{align*}
  \axrho{0}(\br,t) &= \axphi{0}(\br,t), \\
  \axrho{1}(\br,t) &= \axphi{1}(\br,t) -\khat \axphi{0}(\brb,t) U(\br,\brb), \\
  \axrho{2}(\br,t) &
= \axphi{2}(\br,t)  - \khat \paren{\axphi{1}(\brb,t) - \khat \gamma \axphi{0}(\brb,t)} U(\br,\brb).
\end{align*}
The principal eigenvalue and eigenfunction expansions of the previous
section imply that
\begin{equation*}
  \st(\br,0) \sim \delta(\br - \br_0) - \frac{1}{\abs{\Omega}} - \axf{1}(\br,\bro)   \epsilon - \axf{2}(\br,\bro)   \epsilon^{2}.
\end{equation*}
Substituting this expansion into~\eqref{eq:24} yields
\begin{equation}
  \label{eq:53}
  \axphi{0}(\br,t) = G(\br,\bro,t) - \frac{1}{\abs{\Omega}}, 
\end{equation}
\begin{equation}
  \label{eq:54}
  \begin{split}
    \axphi{1}(\br,t) &=  \khat U(\br,\brb)\axphi{0}(\brb ,t) +\frac{\khat}{\abs{\Omega}} U(\bro,\brb) \\
   &\quad - \khat\int_{0}^{t}G(\br,\brb, t-s)\axphi{0}(\brb ,s)   ds  \\
   &\quad+ \frac{\khat}{\abs{\Omega}} \int_\Omega G(\br,\br',t) U(\br',\brb)  d \br',
   \end{split}
\end{equation}
  and
\begin{widetext}
\begin{multline}
  \label{eq:4}
    \axphi{2}(\br,t) = -\khat^{2} \reggf  U(\br,\brb)\axphi{0}(\brb,t)
    +\khat U(\br,\brb) \axphi{1}(\brb,t)
  + \khat^{2}\reggf  \int_{0}^{t}G(\br,\brb,t-s) \axphi{0}(\brb,s)   ds\\ 
 - \khat \int_{0}^{t}G(\br,\brb,t-s) \axphi{1}(\brb,s)   ds 
    -\int_{\Omega}G(\br,\br',t)\axf{2}(\br',\bro) d\br'.
\end{multline}
\end{widetext}
Evaluating~\eqref{eq:4} requires the calculation of $\axphi{1}(\brb,t) =
\lim_{\br\to\brb}\axphi{1}(\br,t)$. Let $G_0(\br,\br',t) =
G(\br,\br',t)-\frac{1}{\abs{\Omega}}$.  Using~\eqref{eq:29} we have
that
\begin{equation*}
\begin{split}
 & U(\br,\brb) \axphi{0}(\brb,t) - \int_0^t G(\br,\brb,t-s) \axphi{0}(\brb,s)   ds  \\
&\quad =  \int_0^t \brac{\axphi{0}(\brb,t) - \axphi{0}(\brb,t-s) } G_0 (\br,\brb,s)   ds \\
&\quad\qquad+ \axphi{0}(\brb,t) \int_t^{\infty} G_0(\br,\brb,s)   ds\\
&\quad\qquad  - \frac{1}{\abs{\Omega}} \int_0^t \axphi{0}(\brb,s)   ds. 
\end{split}
\end{equation*}
Combining this expression with the identity
\begin{equation*}
  \int_{\Omega}G(\br,\br',t)U(\br',\brb)d\br'  
  =\int_t^{\infty} G_{0}(\br,\brb,s)ds,
\end{equation*}
reusing~\eqref{eq:29}, and taking the limit $\br\to\brb$, we find
\begin{equation}
  \label{eq:97}
  \begin{split}
    &\axphi{1}(\brb, t) = \frac{\khat}{\abs{\Omega}}\int_{t}^{\infty} \axphi{0}(\brb,s)ds\\
    &  + \khat  G(\brb,\bro,t) \int_t^{\infty} G_0(\brb,\brb,s)ds\\
    &- \khat\int_{0}^{t} \bigg[ G(\brb, \bro, t-s) - G(\brb, \bro, t) \bigg] G_0(\brb, \brb, s)ds.
  \end{split}
\end{equation}
Note, in the first integral $G_0(\brb,\brb,s)$ will scale like
$s^{-3/2}$ as $s \to 0$.  This singularity is weakened by the
$G(\brb,\bro,t-s) - G(\brb,\bro,t)$ term, which formally scales like
$s$ as $s \to 0$ (for fixed $t > 0$). As such, the overall singularity
in $s$ is integrable. Similarly, $G(\brb,\bro,t)$ will cancel the
effective singularity in $t$ of the last integral.

We therefore find the recursive expansion formula that
\begin{theorem} \label{thm:shortTimeDensityExpansion}
  The asymptotic expansion of $\st(\br,t)$ for $\epsilon \ll \diam \Omega$
is given by
\begin{subequations} \label{eq:shortTimeDensExpan}
  \begin{align}
    \axrho{0}(\br,t) &= G(\br,\bro,t) - \frac{1}{\abs{\Omega}}, \\
    \begin{split}
      \axrho{1}(\br,t) &= -\khat\int_{0}^{t}G(\br,\brb, t-s)\axphi{0}(\brb ,s)ds \\
      & \quad  + \frac{\khat}{\abs{\Omega}} \int_\Omega G(\br,\br',t) U(\br',\brb)  d \br'\\
      &\quad +\frac{\khat}{\abs{\Omega}} U(\bro,\brb),
    \end{split}\\
    \begin{split}
      \label{eq:st2}
      \axrho{2}(\br,t) &= \khat^{2}\reggf \int_{0}^{t}G(\br,\brb,t-s) \axphi{0}(\brb,s)   ds\\
      & \quad - \khat \int_{0}^{t}G(\br,\brb,t-s) \axphi{1}(\brb,s)   ds \\
      & \quad -\int_{\Omega}G(\br,\br',t)\axf{2}(\br',\bro) d\br'.
    \end{split}
  \end{align}  
\end{subequations}
\end{theorem}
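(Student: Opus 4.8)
The plan is to obtain the three coefficients $\axrho{0}$, $\axrho{1}$, and $\axrho{2}$ directly from the closed integral equation~\eqref{eq:24} for the regular part $\streg$, by inserting the postulated $\epsilon$-expansions of $\streg$ and of the projected initial data and matching like powers of $\epsilon$. First I would expand the rational coefficient in~\eqref{eq:24} as a geometric series, $\frac{\khat\epsilon}{1 + \reggf\khat\epsilon} = \khat\epsilon - \reggf\khat^{2}\epsilon^{2} + \bigo(\epsilon^{3})$, and expand $\st(\br,0)$ using the large-time results~\eqref{eq:f1} and~\eqref{eq:f2} from the previous section. Substituting both into~\eqref{eq:24} and collecting powers of $\epsilon$ then produces a recursion for the coefficients $\axphi{n}(\br,t)$ of $\streg$.

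At $\bigo(\epsilon^{0})$ the bracketed pseudopotential term in~\eqref{eq:24} does not contribute, since its prefactor is $\bigo(\epsilon)$; using the conservation identity $\int_{\Omega} G(\br,\br',t)\,d\br' = 1$ (a consequence of the Neumann condition) to integrate the spatially constant part of the initial data, I recover $\axphi{0}(\br,t) = G(\br,\bro,t) - \abs{\Omega}^{-1}$, i.e.~\eqref{eq:53}. At $\bigo(\epsilon)$ and $\bigo(\epsilon^{2})$ the same conservation identity reduces the integrals of $G$ against the $\br'$-independent pieces of the initial data (including the term $U(\bro,\brb)$) to those constants themselves, which yields~\eqref{eq:54} and~\eqref{eq:4} for $\axphi{1}$ and $\axphi{2}$. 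The essential final step is to pass from the regular-part coefficients $\axphi{n}$ to the density coefficients $\axrho{n}$ through the relations implied by the point boundary condition~\eqref{eq:ptBndCondit}. Here the terms proportional to $U(\br,\brb)$ in $\axphi{1}$ and $\axphi{2}$ cancel identically against the correction terms $-\khat\axphi{0}(\brb,t)U(\br,\brb)$ and $-\khat(\axphi{1}(\brb,t) - \khat\reggf\axphi{0}(\brb,t))U(\br,\brb)$, leaving precisely the integral expressions claimed in~\eqref{eq:shortTimeDensExpan}; this cancellation is the mechanism that removes the spurious large-time growth noted at the start of this section.

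I expect the main obstacle to be the evaluation of the limit $\axphi{1}(\brb,t) = \lim_{\br\to\brb}\axphi{1}(\br,t)$, which is required to make sense of the term $-\khat\int_{0}^{t} G(\br,\brb,t-s)\axphi{1}(\brb,s)\,ds$ appearing in $\axrho{2}$. Individually the terms of~\eqref{eq:54} diverge as $\br\to\brb$: the pseudo-Green's function carries the singularity $U(\br,\brb)\sim(4\pi D\abs{\br-\brb})^{-1}$, and the convolution $\int_{0}^{t} G(\br,\brb,t-s)\axphi{0}(\brb,s)\,ds$ diverges because $G_{0}(\brb,\brb,s)\sim s^{-3/2}$ as $s\to 0$. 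The strategy is to use the identity~\eqref{eq:29}, $U = \int_{0}^{\infty} G_{0}\,dt$, to rewrite $U(\br,\brb)\axphi{0}(\brb,t)$ as a time integral and combine it with the convolution term, so that the divergent pieces enter only through the combination $\axphi{0}(\brb,t) - \axphi{0}(\brb,t-s)$, which vanishes linearly in $s$ and tames the $s^{-3/2}$ singularity to an integrable $s^{-1/2}$, while a compensating $\int_{t}^{\infty} G_{0}\,ds$ term remains finite. Together with the identity $\int_{\Omega} G(\br,\br',t)U(\br',\brb)\,d\br' = \int_{t}^{\infty} G_{0}(\br,\brb,s)\,ds$, letting $\br\to\brb$ then produces the regularized formula~\eqref{eq:97}, which closes the recursion. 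Verifying that these cancellations are exact and that each resulting integral converges for fixed $t>0$ is the one genuinely delicate part of the argument; the remainder is systematic bookkeeping of the $\epsilon$-expansion.
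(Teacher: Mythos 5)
Your proposal is correct and follows essentially the same route as the paper's derivation: expanding the closed integral equation~\eqref{eq:24} and the projected initial data in powers of $\epsilon$, recovering $\axphi{0}$, $\axphi{1}$, $\axphi{2}$ as in~\eqref{eq:53}, \eqref{eq:54}, \eqref{eq:4}, passing to the $\axrho{n}$ via the point boundary condition~\eqref{eq:ptBndCondit} so that the $U(\br,\brb)$ terms cancel, and regularizing $\axphi{1}(\brb,t)$ with~\eqref{eq:29} to obtain~\eqref{eq:97}. You also correctly identify the delicate point, namely that the $s^{-3/2}$ singularity of $G_0(\brb,\brb,s)$ is tamed to an integrable $s^{-1/2}$ by the difference $\axphi{0}(\brb,t)-\axphi{0}(\brb,t-s)$, which is exactly the argument given in the paper.
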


\comment{As a short time correction to $\lt(\br,t)$, we expect as $t
  \to \infty$, $\st(\br,t) \to 0$ (away from the singularity at
  $\br=\brb$). Using that $\lim_{t \to \infty} G(\br,\bro,t) =
  \abs{\Omega}^{-1}$,~\eqref{eq:28b}, and~\eqref{eq:29} it is
  immediate that $\lim_{t \to \infty} \axrho{0}(\br,t) = \lim_{t \to
    \infty} \axrho{1}(\br,t) = 0$ for $\br \neq \brb$. In
  Appendix~\ref{ap:pst2LongTime} we show that $\lim_{t \to \infty}
  \axrho{2}(\br,t) = 0$ for $\br \neq \brb$. }
Let
\begin{align} \label{eq:fstDef}
\fst(t) &\equiv \axphi{0}(\brb,t) = G(\brb, \bro, t)-\frac{1}{\abs{\Omega}},\\
  \barU &\equiv U(\bro, \brb).
\end{align}
Combining Theorem~\ref{thm:shortTimeDensityExpansion} with the long
time expansion~\eqref{eq:ltDensityExpan} we find,
\begin{theorem} \label{thm:diffDensExpansion}
  For $\epsilon \ll \diam \Omega$,
  \begin{equation} 
    \begin{split}
      \label{eq:diffDensExpansion}
      p(\br,t) &\sim  G(\br,\bro,t) - \frac{1}{\abs{\Omega}} \paren{ 1 - e^{-\lambdalt t}}\\ 
      &\quad -\frac{\epsilon \khat}{\abs{\Omega}} \brac{U(\br,\brb)e^{-\lambdalt t} - (1 - e^{-\lambdalt t}) \barU}\\ 
      &\quad + \epsilon \frac{\khat}{\abs{\Omega}} \int_\Omega G(\br,\br',t) U(\br',\brb)   d \br' \\
      &\quad - \epsilon \khat \int_0^t G(\br,\brb,t-s) \fst(s)   ds.
    \end{split}
  \end{equation}
\end{theorem}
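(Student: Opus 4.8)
The plan is to use the splitting $p = \lt + \st$ from~\eqref{eq:46} and simply superimpose the two asymptotic series already in hand: the long-time expansion~\eqref{eq:ltDensityExpan} for $\lt$ and the short-time expansion of Theorem~\ref{thm:shortTimeDensityExpansion} for $\st$. The one conceptual point to respect is that the exponential factor $e^{-\lambdalt t}$ is never Taylor-expanded in $\epsilon$; it is carried along as a bounded $O(1)$ multiplier, which is precisely what keeps the resulting expansion uniform in $t$. Concretely, I would write $\lt \sim \brac{\abs{\Omega}^{-1} + \epsilon\,\axf{1} + \epsilon^2\,\axf{2}}e^{-\lambdalt t}$ and $\st \sim \axrho{0} + \epsilon\,\axrho{1} + \epsilon^2\,\axrho{2}$, add them, and collect the coefficients of $\epsilon^0$ and $\epsilon^1$, discarding the $\epsilon^2$ terms not displayed in~\eqref{eq:diffDensExpansion}.

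At order $\epsilon^0$ the two contributions are $\abs{\Omega}^{-1}e^{-\lambdalt t}$ from $\lt$ and $\axrho{0} = G(\br,\bro,t) - \abs{\Omega}^{-1}$ from $\st$; their sum is $G(\br,\bro,t) - \abs{\Omega}^{-1}\paren{1 - e^{-\lambdalt t}}$, which is the first line of~\eqref{eq:diffDensExpansion}. At order $\epsilon^1$ I would first substitute $\axf{1}$ from~\eqref{eq:f1}, namely $\axf{1}(\br,\bro) = -\khat\abs{\Omega}^{-1}\paren{U(\br,\brb) + U(\bro,\brb)}$, and inside $\axrho{1}$ invoke the definitions in~\eqref{eq:fstDef}, i.e.\ $\axphi{0}(\brb,s) = \fst(s)$ and $\barU = U(\bro,\brb)$. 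The $\epsilon^1$ coefficient is then $\axf{1}e^{-\lambdalt t} + \axrho{1}$. The pieces $-\khat\abs{\Omega}^{-1}U(\br,\brb)e^{-\lambdalt t}$, the convolution $-\khat\int_0^t G(\br,\brb,t-s)\fst(s)\,ds$, and the integral $\khat\abs{\Omega}^{-1}\int_\Omega G(\br,\br',t)U(\br',\brb)\,d\br'$ all transcribe directly to the last three lines of~\eqref{eq:diffDensExpansion}. The only genuine bookkeeping is the $\barU$ contribution: $\lt$ supplies $-\khat\abs{\Omega}^{-1}\barU\,e^{-\lambdalt t}$, while $\axrho{1}$ supplies $+\khat\abs{\Omega}^{-1}\barU$, and together they combine to $\khat\abs{\Omega}^{-1}\barU\paren{1 - e^{-\lambdalt t}}$, reproducing the $(1 - e^{-\lambdalt t})\barU$ term on the second line. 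Multiplying the assembled coefficient by $\epsilon$ completes the identification.

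Because every step is an algebraic regrouping of previously established expansions, I expect no substantive obstacle; the two things to watch are (i) keeping $e^{-\lambdalt t}$ unexpanded so the $\epsilon^0$ and $\epsilon^1$ groupings stay clean and the estimate remains uniform in $t$, and (ii) correctly merging the two $\barU$ terms of opposite sign. A statement valid through $O(\epsilon^2)$ would follow by the identical procedure, appending $\epsilon^2\brac{\axf{2}e^{-\lambdalt t} + \axrho{2}}$, though this correction is suppressed in the displayed formula.
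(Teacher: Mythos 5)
Your proposal is correct and is essentially identical to the paper's own (very brief) proof, which simply combines Theorem~\ref{thm:shortTimeDensityExpansion} with the long-time expansion~\eqref{eq:ltDensityExpan} via the splitting $p = \lt + \st$, truncating at $O(\epsilon)$ while leaving $e^{-\lambdalt t}$ unexpanded. Your explicit collection of the $\epsilon^0$ and $\epsilon^1$ coefficients, including the merging of the two $\barU$ terms into $\khat\abs{\Omega}^{-1}\barU\paren{1-e^{-\lambdalt t}}$, is exactly the bookkeeping the paper leaves implicit.
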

Note, based on Theorem~\ref{thm:shortTimeDensityExpansion} one can
derive an expansion of $p(\br,t)$ valid through terms of
$O(\epsilon^2)$. That said, this expression is of sufficient
complexity that we do not summarize it here.

\subsection{First passage time density}
\label{sec:first-passage-time}
Denote by $T$ the first passage time (FPT) for the diffusing molecule
to exit through $\partial \Omega_{\epsilon}$. The \comment{FPT cumulative
distribution is defined as}
\begin{equation}
  \label{eq:32}
  \mathcal{F}(t) \equiv \prob[T<t] =  1-\int_{\Omega} p(\br,t)d\br.
\end{equation}
Substituting \eqref{eq:46} into \eqref{eq:32}, we find that
$\mathcal{F}(t) = 1-\sqrt{\abs{\Omega}}\psi(\bro)e^{-\lambda t} -
\int_{\Omega}\st(\br,t)d\br$, where $\psi$ and $\lambda$ are the
principal eigenfunction and eigenvalue satisfying~\eqref{eq:8} for
$n=0$.  From \eqref{eq:33} it follows that $\int_{\Omega}\st(\br,
t)d\br = \int_{\Omega}\streg(\br, t)d \br$, so that
\begin{align}
  \label{eq:12}
  \mathcal{F}(t) = 1-\sqrt{\abs{\Omega}}\psi(\bro)e^{-\lambda t} - \int_{\Omega}\streg(\br,t)d\br.
\end{align}
\comment{Define the cumulative distribution of a standard exponential
  random variable as
  \begin{equation}
    \label{eq:64}
    Y(\tau) \equiv 1 - e^{-\tau}.
  \end{equation}
  Then, the FPT cumulative distribution corresponding to the leading
  order asymptotic expansion of the long time approximation can be
  written as $\fltcd(t) \equiv Y(\lambdalt t) = 1 - e^{-\lambdalt t}$
  (see Introduction and~\eqref{eq:55}).}  \comment{Since $\lambda =
  O(\epsilon)$, we write the uniform approximation to the FPT
  cumulative distribution in terms of the two time scales $t$ and
  $\tau = \lambda t$. Here $\tau$ denotes a shrunken time-scale.}
Notice from \eqref{eq:53} that at leading order,
$\int_{\Omega}\streg(\br, t)d\br \sim
\int_{\Omega}\axphi{0}(\br,t)d\br = 0$.  Substituting \eqref{eq:48},
\eqref{eq:54}, and \eqref{eq:4} into \eqref{eq:12} and collecting
terms in powers of $\epsilon$ yields $\mathcal{F}(t) \sim
\mathcal{F}_{\epsilon}(t,\lambda t)$, where
\begin{multline} 
  \label{eq:61}
 \mathcal{F}_{\epsilon}(t, \tau) \equiv  \Bigg[1 - \epsilon\khat \barU + 2 \epsilon^2 \psibar \sqrt{\abs{\Omega}} \\ -\epsilon^{2}\khat^{2}\bigg(\frac{1}{\abs{\Omega}}\int_{\Omega} U(\bro, \br')U(\br', \brb)d\br' 
        - \reggf \barU \bigg)\Bigg] Y(\tau) \\
    + \left(\epsilon \khat - \epsilon^{2}\khat^{2} \gamma\right)\int_{0}^{t}\fst(s)   ds 
    + \epsilon^2 \khat \int_0^t \axphi{1}(\brb,s)   ds.
\end{multline}
Here $\barU$ and $\fst(t)$ are defined in~\eqref{eq:fstDef} and
$\axphi{1}(\brb,t)$ is given by~\eqref{eq:97}. In evaluating the
various spatial integrals we have made use of the identities
$\int_{\Omega}G(\br, \br', t)d\br = 1$ and $\int_{\Omega}U(\br, \br') d\br = 0$.
An explicit asymptotic expansion of  $\mathcal{F}(t)$ can then be obtained by using that $\lambda \sim \lambdalt$. 
The uniform approximation of the FPT cumulative distribution is therefore $\mathcal{F}(t) \sim \mathcal{F}_{\epsilon}(t,\lambdalt t)$.

By definition, the FPT density function is $f(t) \equiv
\frac{d}{dt}\mathcal{F}(t)$. We denote the expansion of the long time
scale approximation, $\lambda e^{-\lambda t}$, by
\begin{align}
  \label{eq:15}
  \flt(t) &= \D{}{t}\fltcd(t) = \lambdalt e^{-\lambdalt t} \\
&  = \frac{\epsilon \khat}{\abs{\Omega}}\left( 1 -  \khat \reggf\epsilon\right)   e^{-\frac{\khat}{\abs{\Omega}}\left( 1 -  \khat \reggf\epsilon\right)\epsilon t} 
\end{align}
(see \eqref{eq:55}). \comment{Formally differentiating the asymptotic
  expansion $\mathcal{F}_{\epsilon}(t,\lambdalt t)$, we find}
\begin{theorem}
  The asymptotic expansion of $f(t)$ for $\epsilon \ll \diam \Omega$
  is given by
\begin{multline}   \label{eq:13}
  f(t) \sim \Bigg[1 - \epsilon\khat \barU + 2 \epsilon^2 \psibar \sqrt{\abs{\Omega}}\\
    - \epsilon^{2}\khat^{2}\bigg(\frac{1}{\abs{\Omega}}\int_{\Omega} U(\bro, \br')U(\br', \brb)d\br' - \reggf  \barU\bigg) \Bigg]\flt(t) \\
    + \paren{\epsilon \khat - \epsilon^{2}\khat^{2} \gamma} \fst(t) 
    + \epsilon^2 \khat   \axphi{1}(\brb,t).
\end{multline}
\end{theorem}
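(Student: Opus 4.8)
The plan is to obtain $f(t)$ simply by differentiating the cumulative distribution expansion already in hand. From~\eqref{eq:61} we have the uniform approximation $\mathcal{F}(t) \sim \mathcal{F}_{\epsilon}(t,\lambdalt t)$, and since by definition $f(t) = \frac{d}{dt}\mathcal{F}(t)$, I would claim $f(t) \sim \frac{d}{dt}\mathcal{F}_{\epsilon}(t,\lambdalt t)$ and evaluate the right-hand side term by term. To organize the calculation I would split $\mathcal{F}_{\epsilon}(t,\tau)$ into the three structurally distinct pieces visible in~\eqref{eq:61}: an $\epsilon$-dependent coefficient (call it $A_{\epsilon}$, the whole bracket multiplying $Y(\tau)$) that is constant in both $t$ and $\tau$, times $Y(\tau)$; plus the two time integrals $\paren{\epsilon\khat - \epsilon^2\khat^2\reggf}\int_0^t \fst(s)\, ds$ and $\epsilon^2\khat\int_0^t \axphi{1}(\brb,s)\, ds$, each depending on $t$ only through its upper limit. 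Note $A_{\epsilon}$ is genuinely $t$- and $\tau$-independent, since $\barU$, $\psibar$, $\reggf$, and $\frac{1}{\abs{\Omega}}\int_{\Omega} U(\bro,\br')U(\br',\brb)\,d\br'$ are all constants.

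For the first piece I would use the chain rule with the slow time $\tau = \lambdalt t$: because $Y$ depends on $t$ only through $\tau$, $\frac{d}{dt}Y(\lambdalt t) = \lambdalt Y'(\lambdalt t) = \lambdalt e^{-\lambdalt t}$, which is precisely $\flt(t)$ by~\eqref{eq:15}. Since $A_{\epsilon}$ carries no $t$-dependence, differentiation leaves it untouched, producing the bracketed coefficient times $\flt(t)$ in~\eqref{eq:13}. For the two integral pieces I would apply the fundamental theorem of calculus, $\frac{d}{dt}\int_0^t g(s)\, ds = g(t)$, which immediately yields $\paren{\epsilon\khat - \epsilon^2\khat^2\reggf}\fst(t)$ and $\epsilon^2\khat\,\axphi{1}(\brb,t)$. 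Collecting the three contributions reproduces~\eqref{eq:13} exactly.

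The main obstacle is not this arithmetic but justifying the very first step: that an asymptotic expansion of $\mathcal{F}(t)$ may be differentiated term by term to produce an asymptotic expansion of $f(t)$ — this is the content of the word ``formally'' in the preceding remark. Term-by-term differentiation of an asymptotic expansion is not automatic; it requires control on the derivative of the remainder, i.e.\ one must show that if $\mathcal{F}(t) = \mathcal{F}_{\epsilon}(t,\lambdalt t) + o(\epsilon^2)$ then the error is differentiable in $t$ with $\frac{d}{dt}$ of the remainder still $o(\epsilon^2)$, uniformly in $t$ on the relevant range. I would address this by appealing to the explicit, closed-form structure of every term entering $\mathcal{F}_{\epsilon}$, each built from the heat kernel $G$ (smooth for $t>0$) and the pseudo-Green's function $U$, so that the remainder inherits enough regularity for its $t$-derivative to be estimated directly.

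A secondary point I would verify is that the fundamental theorem of calculus genuinely applies to the two integral terms, which requires the integrands $\fst(s)$ and $\axphi{1}(\brb,s)$ to be integrable up to $s=0$ and continuous at the upper endpoint $s=t$. For $\bro \neq \brb$ this is clear: $\fst(s) = G(\brb,\bro,s) - \abs{\Omega}^{-1}$ is continuous and bounded on $[0,t]$, while the discussion following~\eqref{eq:97} already establishes that the $s^{-3/2}$ singularity of $G_0(\brb,\brb,s)$ appearing in $\axphi{1}(\brb,s)$ is rendered integrable near $s=0$ by the compensating $G(\brb,\bro,t-s)-G(\brb,\bro,t)$ factor. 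Hence both integrals are differentiable in $t$ with the stated integrand values, completing the justification of~\eqref{eq:13}.
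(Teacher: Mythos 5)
Your proposal is correct and follows essentially the same route as the paper: the paper likewise obtains~\eqref{eq:13} by formally differentiating $\mathcal{F}_{\epsilon}(t,\lambdalt t)$ from~\eqref{eq:61}, with the chain rule on $Y(\lambdalt t)$ producing $\flt(t)$ and the fundamental theorem of calculus producing the $\fst(t)$ and $\axphi{1}(\brb,t)$ terms. Your added remarks on justifying term-by-term differentiation and on the integrability of $\axphi{1}(\brb,s)$ near $s=0$ go beyond the paper's explicitly ``formal'' treatment but do not change the argument.
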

Since we have derived the expansion of $f(t)$ by formal
differentiation of the expansion of $\mathcal{F}(t)$, we obtain terms
that are of higher order than $O(\epsilon^2)$ in~\eqref{eq:13} (as
$\flt(t)$ is $O(\epsilon)$).  \comment{However, for brevity we ignore
  the $\epsilon$ dependence of $\lambdalt$ when referring to the order
  of the approximation. In other words, when referring to the
``leading order'', ``first order'', or ``second order'' expansion of
$f(t)$, we mean those terms arising from the derivative of the
corresponding order expansion of $\mathcal{F}_{\epsilon}(t, \lambdalt
t)$, treating $Y(\lambdalt t)$ as $O(1)$.} As such, the ``leading
order'' expansion of $f(t)$ will be $\flt(t)$, the ``first order''
expansion will be
\begin{equation*}
  \paren{1 - \epsilon\khat \barU} \flt(t) 
  + \epsilon \khat   \fst(t),
\end{equation*}
and the ``second order'' expansion will be~\eqref{eq:13}.

\section{A spherical trap concentric to a spherical domain} \label{sec:spher-trap-conc}
To illustrate our asymptotic results we consider the problem of a
diffusing molecule searching for a small spherical trap of radius
$\epsilon$ centered at the origin. We assume the trap is contained
within a larger, concentric, spherical domain with unit radius.  As
this problem is exactly solvable, we will use the exact solution
formulae summarized in this section to study the accuracy of our
asymptotic expansions from the preceding sections as both $\epsilon$
and the number of expansion terms are varied.

Denote by $p(r, t)$ the spherically symmetric probability density for
a diffusing molecule to be a distance $r$ from the origin at time $t$.
We assume the trap is centered at the origin, so that $\brb = \vec{0}$, and
let $r=\abs{\br}$, $r_{0} = \abs{\bro}$.  For $p(r, 0) =
\delta(r-r_{0})/r^{2}$, we have that
\begin{equation}
  \label{eq:35}
  p(r,t) = \iint_{\partial B_{1}(\vec{0})}p(\br, t) dS,
\end{equation}
where $\partial B_{1}(\vec{0})$ denotes the boundary of the unit sphere.

The advantage of this geometry is that an exact solution to the diffusion
equation~\eqref{eq:50} is known~\cite{carslaw59a}. We find
\begin{equation} \label{eq:smolEigFuncExpan}
p(r,t) = \sum_{n=1}^{\infty}{\alpha_n \phi_n(r_0) \phi_{n}(r)e^{-\lambda_n t}}, \quad \epsilon<r<1,
\end{equation}
where 
\begin{equation*}
  \phi_n(r) =\frac{1}{r}\brac{\frac{\sin(\sqrt{\lambda_n}(1-r))}{\sqrt{\lambda_n}}-
  \cos(\sqrt{\lambda_n}(1-r))},
\end{equation*}
$\alpha_n = \int_{\epsilon}^{1} \paren{\phi_n(r)}^2 r^2   dr$, and the eigenvalue
$\lambda_{n}$ is given implicitly by
\begin{equation}
  \label{eq:59}
  \tan^{-1}(\sqrt{\frac{\lambda_{n}}{D}}) - (1-\epsilon)\sqrt{\frac{\lambda_{n}}{D}} + n\pi = 0.
\end{equation}
The corresponding first passage time density is
\begin{equation} \label{eq:52}
  f(t) = - \D{}{t} \int_{\epsilon}^{1} p(r,t) r^2   dr = \frac{2}{r_{0}}\sum_{n=0}^{\infty}b_{n}\lambda_{n}e^{-\lambda_{n}t},
\end{equation}
where
\begin{multline}
  \label{eq:58}
  b_{n} = \Bigg[\sqrt{\frac{D}{\lambda_{n}}}\left(\epsilon-\cos((1-\epsilon)\sqrt{\frac{\lambda_{n}}{D}})\right) \\ +\frac{D}{\lambda_{n}}\sin((1-\epsilon)\sqrt{\frac{\lambda_{n}}{D}})\Bigg]\\
\times\left(\frac{(1 + \frac{\lambda_{n}}{D})\sin((r_{0}-\epsilon)\sqrt{\frac{\lambda_{n}}{D}})}{(1-\epsilon)(1+\frac{\lambda_{n}}{D}) - 1}\right).
\end{multline}

In the remainder of this section, we list the quantities necessary to
compute the asymptotic expansions of $p(r,t)$ and $f(t)$ for small
$\epsilon$. Recalling that $\brb = \vec{0}$, $\bar{U}$
is then given by~\cite{ward93a} 
\begin{equation}
  \label{eq:38}
  \barU = U(\bro, \vec{0}) = \frac{1}{4\pi D}\left(\frac{1}{r_0} + \frac{r_0^{2}}{2} - \frac{9}{5}\right).
\end{equation}
It follows from \eqref{eq:45} that
\begin{equation}
  \label{eq:39}
  \reggf = -\frac{9}{20\pi D},
\end{equation}
and from~\eqref{eq:princEfuncNormalize} that
\begin{equation*}
  \psibar = \frac{-72 \pi}{175 \abs{\Omega}^{\frac{3}{2}}}.
\end{equation*}
The fundamental solution $G(\br_0,\vec{0},t) = g(r_0,0,t) / 4 \pi$,
where $g(r,r_0,t)$ denotes the spherically-symmetric Green's function
for the $\epsilon = 0$ Neumann problem (see
Appendix~\ref{S:neuGFAppendix}), is given by
\begin{align}
  \label{eq:75}
  G(\bro, \vec{0}, t) &= \frac{1}{\abs{\Omega}} + \sum_{n=1}^{\infty}c_{n}e^{-\mu_{n}t}, \\
  G(\vec{0}, \vec{0}, t) &= \frac{1}{\abs{\Omega}} + \sum_{n=1}^{\infty}a_{n}e^{-\mu_{n}t},
\end{align}
where
\begin{equation}
  a_{n} = \frac{1}{2\pi}\left(1+\frac{\mu_{n}}{D}\right),
  \quad  c_{n} = a_{n}\sinc(\sqrt{\frac{\mu_{n}}{D}}r_{0}),
\end{equation}
with $\sinc(x) = \sin(x) / x$.  The eigenvalues, $\mu_{n}$, satisfy
\begin{equation}
  \label{eq:60}
    \tan^{-1} (\sqrt{\frac{\mu_{n}}{D}}) - \sqrt{\frac{\mu_{n}}{D}} + n\pi = 0.
\end{equation}
Note that by comparing \eqref{eq:59} to \eqref{eq:60} it follows that
$\lim_{\epsilon\to0}\lambda_{n}=\mu_{n}$.  Integrating~\eqref{eq:diffDensExpansion} over the unit
sphere we find
\begin{equation}
 \label{eq:sphDiffDensExpansion}
 \begin{split}
   p(r,t) &\sim  g(r,r_0,t) - 3 \paren{ 1 - e^{-\lambdalt t}} \\
   &\quad - 3 \epsilon \khat \brac{U(\br,\vec{0})e^{-\lambdalt t} -  (1 - e^{-\lambdalt t}) \barU}\\ 
   &\quad +  \frac{\epsilon \khat }{\abs{\Omega}} \int_{0}^{1} g(r,r',t) U(\br',\vec{0}) (r')^{2}   dr' \\
    &\quad -   \epsilon \khat\int_0^t g(r,0,t-s) \fst(s)   ds.
 \end{split}
\end{equation}
The asymptotic expansion of the first passage time density, $f(t)$,
can be evaluated directly from~\eqref{eq:13}. Here we use
\eqref{eq:29} and \eqref{eq:75} to express $U(\br, \br')$ as an
eigenfunction expansion. We find that
\begin{equation}
  \label{eq:78}
   \int_{\Omega}U(\bro,\br')U(\br',\vec{0})   d\br'= \sum_{n=1}^{\infty}\frac{c_{n}}{\mu_{n}^{2}}.
\end{equation}
The short time correction to the first passage time density is given
by
\begin{equation}
  \label{eq:34}
  \int_{0}^{t}\fst(s)ds = U(\bro, \vec{0}) - \sum_{n=1}^{\infty}\frac{c_{n}}{\mu_{n}}e^{-\mu_{n}t},
\end{equation}
while
\begin{multline}
  \label{eq:14}
    G(\brb, \bro, t)\int_{t}^{\infty}G_{0}(\brb, \brb, s)ds \\
    = \left(\frac{1}{\abs{\Omega}} + \sum_{n=1}^{\infty}c_{n}e^{-\mu_{n}t}\right)\sum_{m=1}^{\infty}\frac{a_{m}}{\mu_{m}}e^{-\mu_{m}t}.
\end{multline}
To evaluate the time convolution,
\begin{equation*}
\int_{0}^{t} \bigg[ G(\brb, \bro, t-s) - G(\brb, \bro, t) \bigg] G_0(\brb, \brb, s)ds,
\end{equation*}
in~\eqref{eq:97} we use the Python \texttt{quad} routine. The integral
is split into a short time portion, $s \in \paren{0,s^{*}}$, and a
long time portion, $s \in \paren{s^{*},t}$. $s^{*}$ is chosen
sufficiently small that $G(\brb,\brb,s)$ can be approximated by a
Gaussian evaluated at the origin, $(4 \pi D s)^{-3/2}$, with the same
absolute error tolerance we use in evaluating the preceding series
(see Appendix~\ref{S:numericsPoints}).

\subsection{Results} \label{S:numericalResults} We now study the error
between the exact spatial and first passage time densities from the
preceding section, $p(r,t)$ and $f(t)$, and their asymptotic
approximations for small $\epsilon$.  In what follows we keep $R=1$,
$D=1$, and vary $\epsilon$ between $10^{-4}$ and $10^{-1}$. The
tolerances we used in evaluating the various series of the previous
section are given in Appendix~\ref{S:numericsPoints}.

While we are interpreting our spatial and time units as
non-dimensionalized, these choices are also consistent with using
spatial units of $\mu \textrm{m}$ and time units of seconds. With these
units the overall domain has roughly the radius of a yeast cell
nucleus. We may therefore interpret the trap as a DNA binding site
that a diffusing protein is searching for.  While trap radii for DNA
binding sites are not generally experimentally measured, the width of
some DNA binding potentials have been measured. For example, the LexA
protein binding potential was found to have a width of approximately
$.5 \textrm{nm}$~\cite{KuhnerLexADNABond}.

The long time approximation of the first passage time density is the
single exponential $\lambda \exp(-\lambda t)$, with the time-scale
$\lambda^{-1}$. The principal eigenvalue $\lambda$ is given implicitly
by \eqref{eq:59} (with $n=0$) and has the asymptotic approximation
$\lambda \sim \lambdalt$ (see also~\eqref{eq:55}). Hence, for small
$\epsilon$, the long time approximation of the first passage time
density is asymptotic to $\flt(t) = \lambdalt \exp(-\lambdalt t)$. As
described at the end of Section~\ref{sec:first-passage-time}, we refer
to $\flt(t)$ as the leading order approximation of $f(t)$ as $\epsilon
\to 0$. (We will also interchangeably refer to $\flt(t)$ as either the
large time or long time approximation.)

\begin{figure}[tb]
  \centering
  \includegraphics[width=7.5cm]{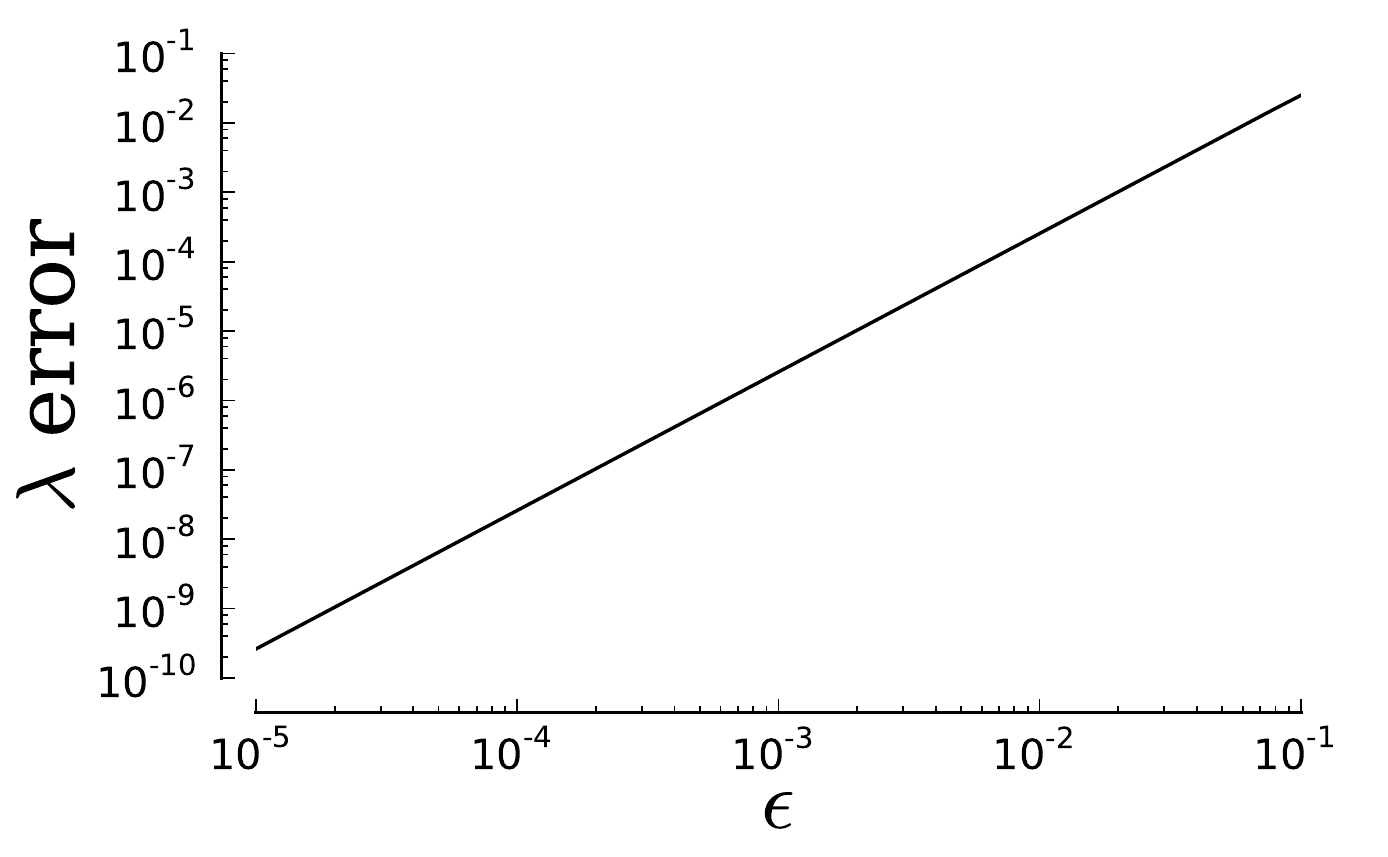}
  \caption{Relative error in approximating the principal eigenvalue,
    $\lambda$, by $\lambdalt$.  Observe that the error decreases like
    $\epsilon^2$ as expected from~\eqref{eq:55}.}
  \label{fig:lambda_error}
\end{figure}
The implicit equation~\eqref{eq:59} can be solved numerically to
calculate $\lambda$ to arbitrary precision by a root finding algorithm
(e.g., Newton's method).  In Fig.~\ref{fig:lambda_error} we compute
the relative error, $\abs{\lambda - \lambdalt} \lambda^{-1}$, of the
asymptotic approximation, $\lambdalt$, as compared to the numerically
estimated value of $\lambda$ (computed to machine precision). We see
that as $\epsilon \to 0$, the relative error between the two decreases
like $\epsilon^{2}$, as expected from~\eqref{eq:55}.

\begin{figure}[tb]
  \centering
  \includegraphics[width=8cm]{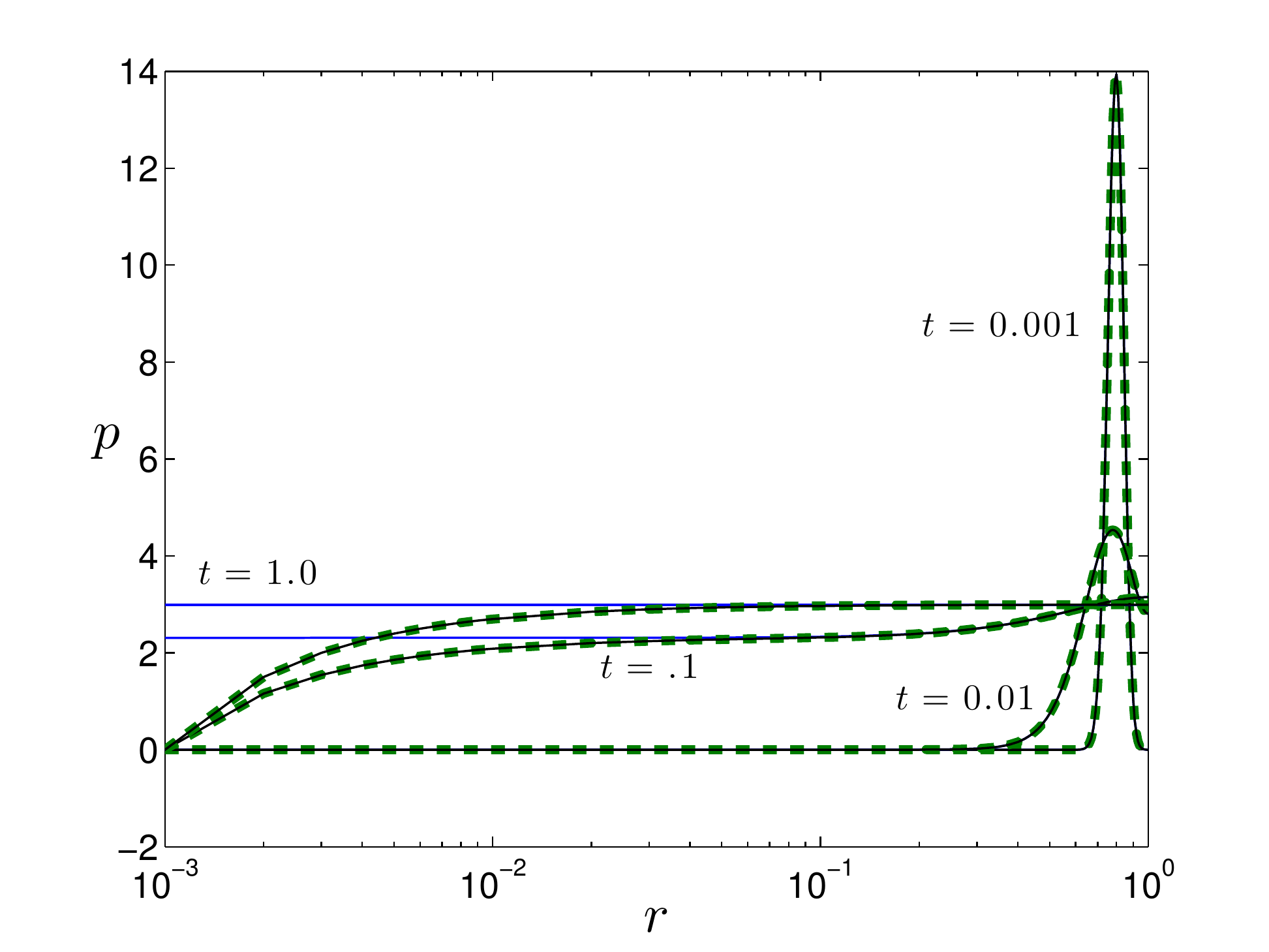}
  \caption{(Color online) The spatial density function $p(r, t)$
    (black curve) and its asymptotic expansions for small $\epsilon$
    at several time points. The blue (light gray) curve gives the
    leading order expansion~\eqref{eq:pLeadingOrderExp},
    $p^{(0)}(r,t)$, while the green dashed curve gives the first order
    expansion~\eqref{eq:sphDiffDensExpansion}. We use a logarithmic
    $x$-axis to emphasize the solution behavior near the target.  $r_0
    = 0.8$ and $\epsilon = 0.001$ (similar to the width of measured
    DNA binding potentials~\cite{KuhnerLexADNABond}). }
  \label{fig:space}
\end{figure}
In Fig.~\ref{fig:space}, we show the leading order spatial density
approximation (blue or light gray curve), the first order expansion
(green dashed curve), and the exact spatial density (black curve).
These curves plot
\begin{equation} \label{eq:pLeadingOrderExp}
  p^{(0)}(r,t) = g(r,r_0,t) - 3 + 3 e^{-\lambdalt t},
\end{equation}
the expansion~\eqref{eq:sphDiffDensExpansion}, and
$p(r,t)$~\eqref{eq:smolEigFuncExpan} respectively.  The spatial
density is shown as a function of $r$ at four different time points.
For this figure we set $r_{0} = 0.8$ and $\epsilon = 0.001$.  The
density is initially concentrated around the initial position at
$t=0.001$ and slowly fills the region $\epsilon<r<1$ until the density
is approximately uniform at $t=1$.  The only visible difference
between the leading order approximation and the exact result is near
the absorbing boundary, $r=\epsilon$, where the exact solution
displays a boundary layer that is lost in the leading order
approximation. The first order
expansion~\eqref{eq:sphDiffDensExpansion} reintroduces this boundary
layer and is indistinguishable from the exact solution at the scale of
the graph.

In the remainder of this section we focus on the approximation of the
first passage time. 
The only free parameters in the model are the radius of the trap,
$\epsilon$, and the initial distance from the trap, $r_{0}$.  The long
time approximation, $\flt(t)$, is independent of $r_{0}$. It follows
that the accuracy of $\flt(t)$ in approximating $f(t)$ improves when
the initial distance from the trap is large (i.e., $\epsilon\ll r_{0}
\leq 1$).  In other words, the long time approximation is best when
the particle is likely to explore a large portion of the domain before
locating the trap.  When the initial distance from the trap is small
(i.e., $\epsilon<r_{0}\ll 1$), we might expect the short time
contribution to be significant since there is a higher probability
that the particle will quickly locate the trap before exploring the
rest of the domain.  In Fig.~\ref{fig:density1}, we show the
asymptotic expansion of the first passage time density~\eqref{eq:13}
for $\epsilon=0.05$ and $r_{0}=0.3$. With this choice the initial
distance of the particle from the trap is small.  Moreover, since the
accuracy of the expansion~\eqref{eq:13} should decrease as $\epsilon$
increases, taking $\epsilon=0.05$ demonstrates the worst case behavior
of the expansion for biologically relevant values of $\epsilon$.

\begin{figure*}[tb]
  \centering
  \includegraphics[width = 15cm]{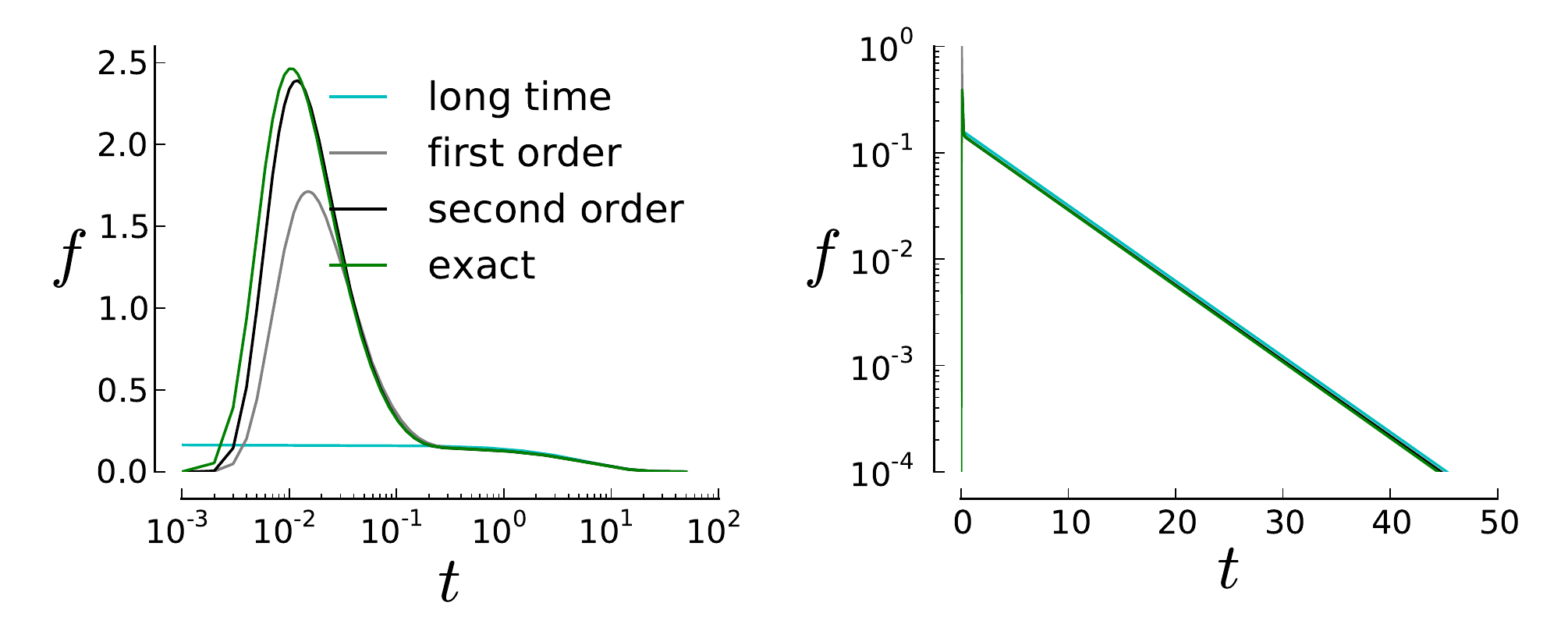}
  \caption{(Color online) The first passage time density, $f(t)$, for
    $r_{0} = 0.3$ and $\epsilon = 0.05$.  Asymptotic approximations of
    varying order are compared to the exact solution.  The left plot
    uses a logarithmic $t$-axis and linear $f$-axis, while the right
    is linear in $t$ and logarithmic in $f$.}
  \label{fig:density1}
\end{figure*}
\begin{figure*}[tbp]
  \centering
  \includegraphics[width = 15cm]{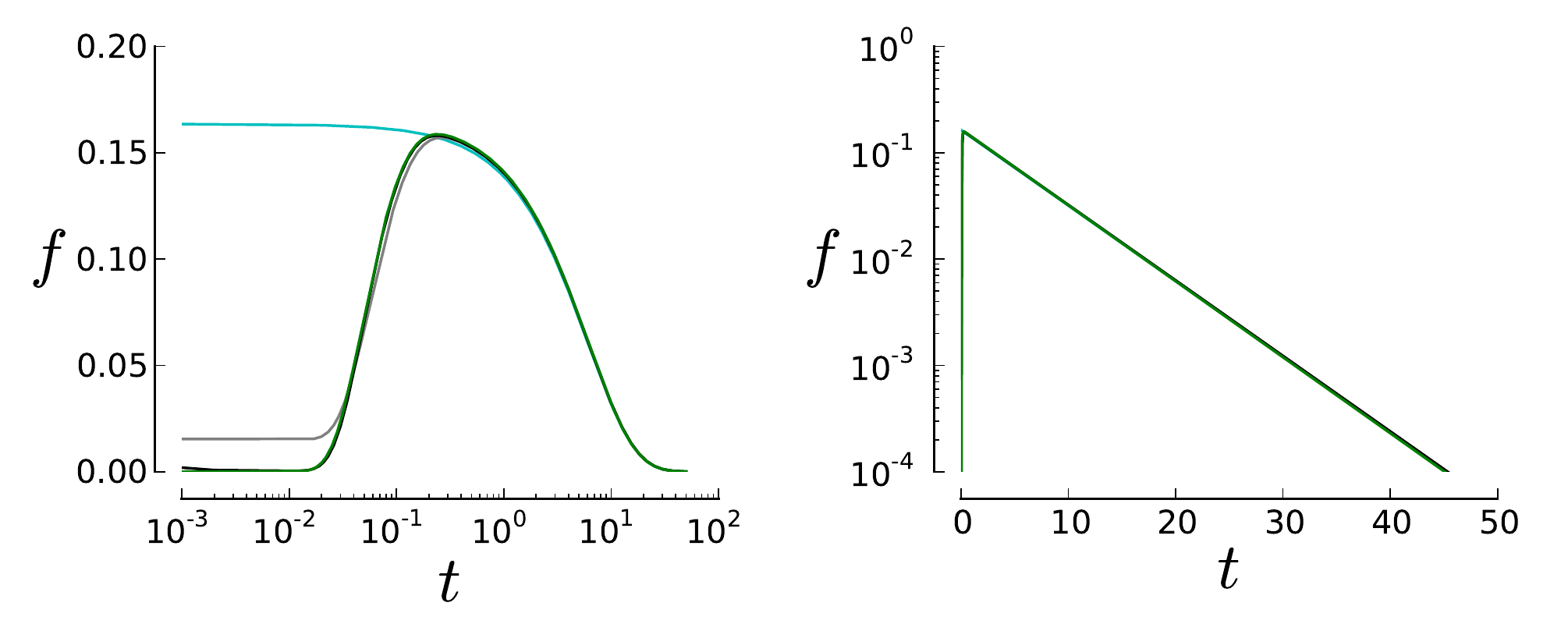}
  \caption{(Color online) The first passage time density, $f(t)$, for
    $r_{0} = 0.8$ and $\epsilon = 0.05$.  Asymptotic approximations of
    varying order are compared to the exact solution. See
    Fig.~\ref{fig:density1} (left panel) for the legend.}
  \label{fig:density2}
\end{figure*}
In Fig.~\ref{fig:density1}(left) the density function is shown with
$t$ on a log scale to accentuate the small time behavior.  There is a
significant difference between the long time approximation (near-flat,
bottom, light blue curve) and the exact solution (uppermost, green
curve). The first and second-order uniform approximations correct for
this difference.  The large-time behavior is shown in
Fig.~\ref{fig:density1}(right) with $f$ on a log scale.  For all
except the shortest times the curve is linear, reflecting the
exponential long-time behavior. We see that on this time-scale there
is very little visible difference between each curve.
Fig.~\ref{fig:density2} is the same as Fig.~\ref{fig:density1}, except
that $r_{0}=0.8$ so that the initial distance from the trap is larger.
In this case the peak in the density occurs at a larger time.  In both
cases, the qualitative difference between the exact solution and the
long-time approximation is a time lag before the exponential long time
behavior dominates.  The time-scale for this time lag is roughly the
diffusive transit time to cover the initial distance from the trap
(i.e., $r_{0}^{2}/D$).

\begin{figure*}[tb]
  \centering
  \includegraphics[width=8cm]{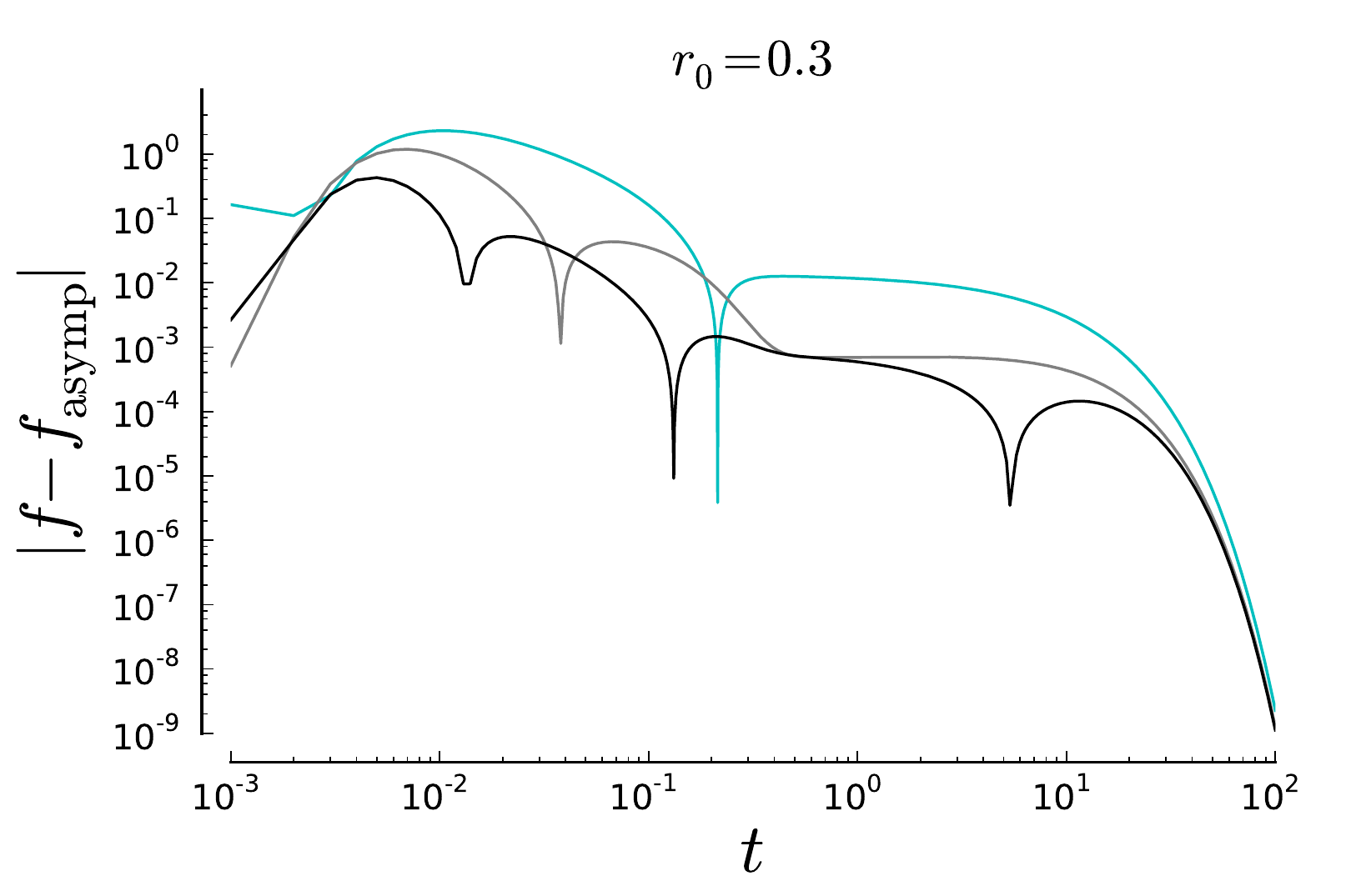}
  \includegraphics[width=8cm]{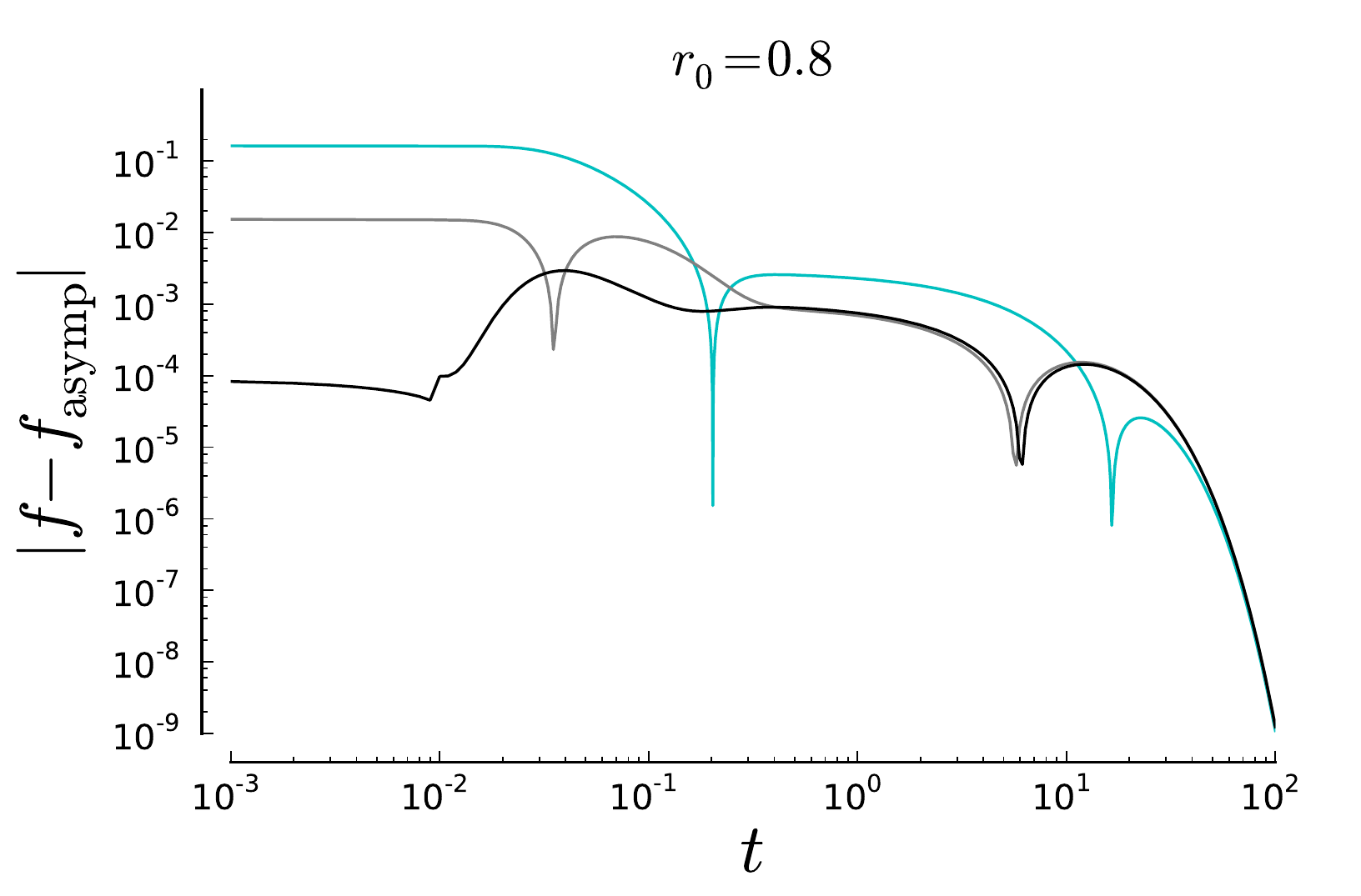}
  \caption{(Color online) Absolute error of the first passage time
    density approximation for $r_{0} = 0.3$ and $r_{0}=0.8$ with
    $\epsilon=0.05$. See Fig.~\ref{fig:density1} (left panel) for the
    legend.}
  \label{fig:density_error1}
\end{figure*}
The absolute error of these approximations is shown in
Fig.~\ref{fig:density_error1} for $r_{0}=0.3$ and $r_{0}=0.8$.  In
both cases, the maximum error is noticeably decreased as the order of
the asymptotic expansion is increased. Comparing the first and second
order expansions, we see the main increase in accuracy results for
times less than $t=1$.  Points in time where one of the approximations
crosses the exact solution result in locally increased accuracy (the
cusp-like drops in the expansion errors).  Interestingly, when $r_{0}
=0.8$ the long time approximation is more accurate for large times
than the first- or second-order uniform approximations.  
Note, however, the error in each expansion at these times is
substantially smaller than for short to moderate times.

\begin{figure}[tb]
  \centering
  \includegraphics[width=8cm]{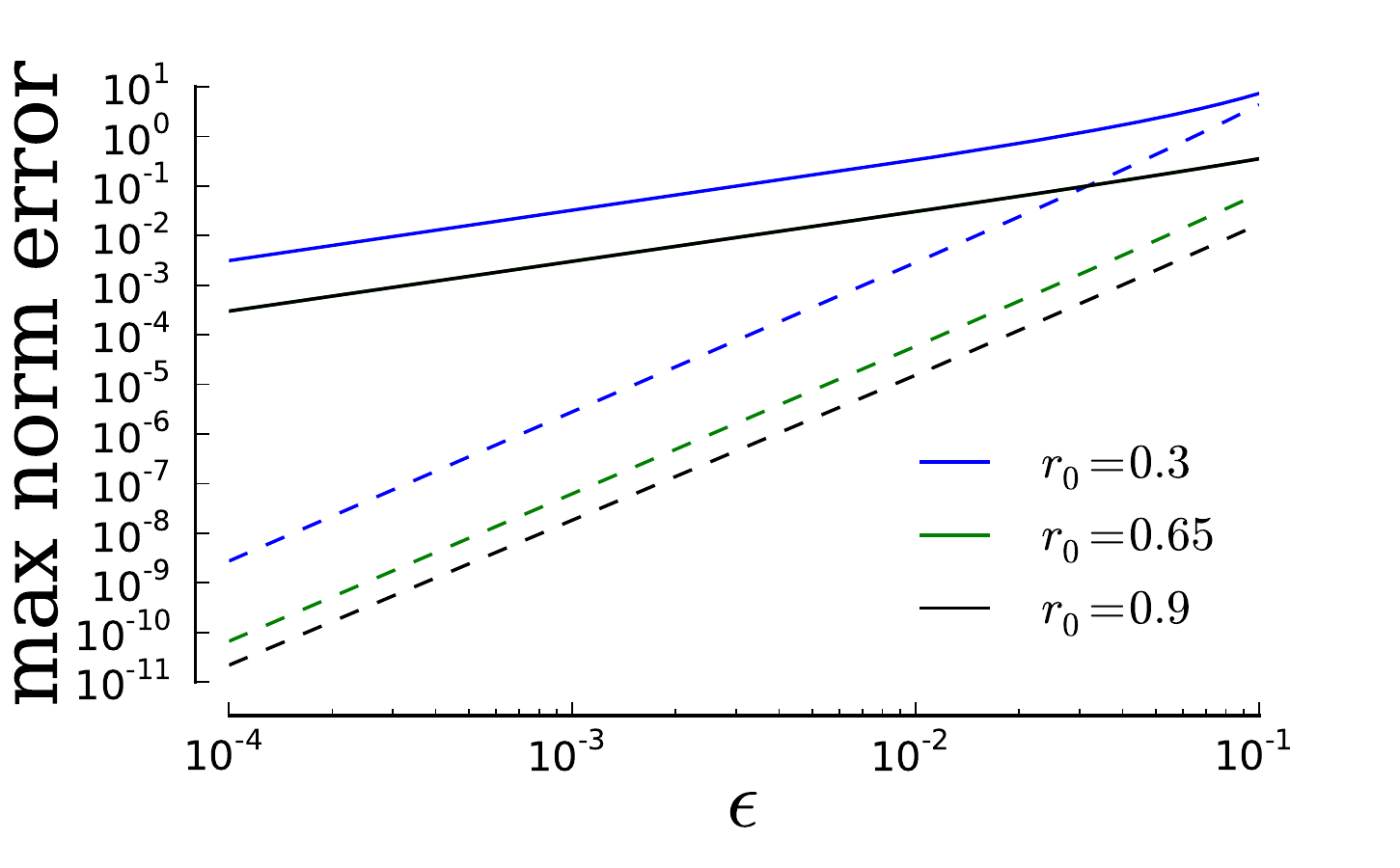}
  \caption{(Color online) The max norm error of the approximation as a
    function of $\epsilon$.  Solid curves show the error of the
    long-time approximation $\flt(t)$. The dashed curves show the
    second order uniform approximation.  Note that the $r_{0}=0.65$
    and $r_{0}=.9$ curves for the large-time approximation are
    indistinguishable.}
  \label{fig:nerror}
\end{figure}
Finally, we examine the max norm error, $\max_{t\geq 0}\abs{f_{\rm
    exact}(t) - f(t)}$, as a function of $\epsilon$ for different
values of $r_{0}$. The time points this error was numerically
evaluated over are the same as those used for the graphs in
Fig.~\ref{fig:density_error1}, and are given in
Appendix~\ref{S:numericsPoints}.  The result shown in
Fig.~\ref{fig:nerror} confirms the asymptotic convergence of the
approximation as $\epsilon \to 0$.  The large-time approximation
\eqref{eq:15} error (solid lines) shows linear convergence, while the
second order uniform approximation \eqref{eq:13} error (dashed line),
which includes short time behavior, shows cubic convergence.

As stated in the Introduction, the mean binding time is well
approximated by the $r_{0}$-independent large-time approximation.
That is, $\ave{T} \sim 1/\lambda$, where $\lambda$ is given by
\eqref{eq:55}.  However, other statistics may be of interest that
depend strongly on $r_{0}$.  One example is the mode, defined as the
most likely binding time, call it $\tau_{\rm m}$, where $f(\tau_{\rm
  m}) = \max_{0\leq t<\infty}f(t)$.  Since the large-time
approximation is an exponential distribution, the corresponding
approximation of the mode is $\tau_{\rm m} \sim 0$.
\begin{figure}[tb]
  \centering
  \includegraphics[width=8cm]{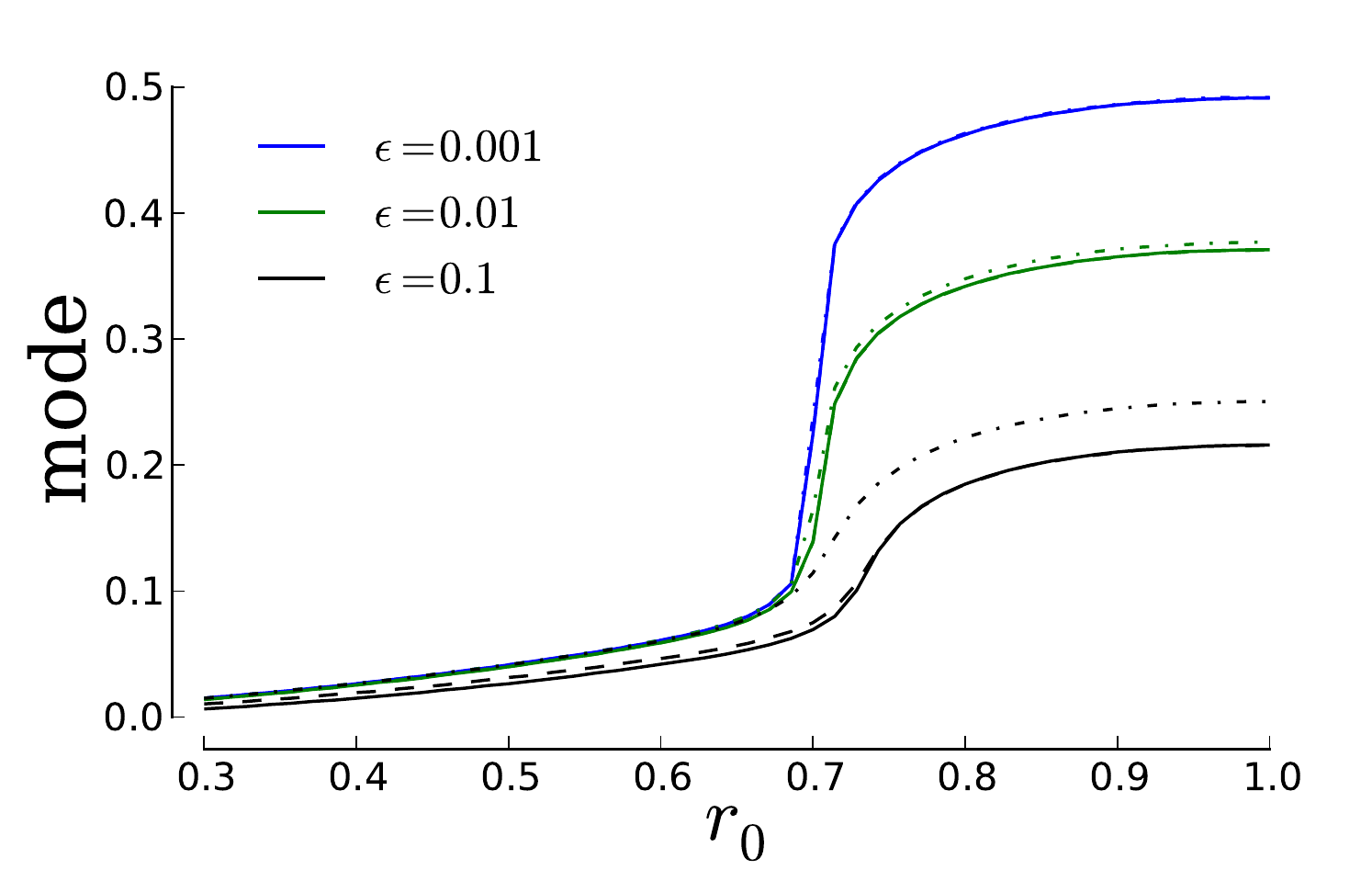}
  \caption{(Color online) The mode of the binding time distribution,
    defined as the most likely binding time, as a function of $r_0$.
    Solid curves show the exact solution, dash dotted curves the first
    order approximation, and the dashed curves the 2nd order
    approximation.}
  \label{fig:mode}
\end{figure}
In figure Fig.~\ref{fig:mode}, we compute the mode by numerically
maximizing the first passage time density.  The exact mode is compared
to first (dash dotted curves) and second order (dashed curves)
approximations of the mode as a function of $r_{0}$.  Each of the
indicated curves are drawn for three different values of $\epsilon$.
For $\epsilon = 10^{-3}$, the difference between each curve is
indistinguishable.  Notice that as $\epsilon$ decreases the mode
increases, particularly for larger values of $r_{0}$, indicating that
the large time approximation of the mode becomes less accurate as
$\epsilon\to 0$.


\section{Discussion}
Although the first passage time of a Brownian particle in a confined
geometry is a well-studied problem, an analytical characterization
that includes short-time behavior of the survival probability density
has been unresolved.  The asymptotic approximation of the long-time
behavior establishes a link between the spatial characteristics of the
problem (i.e., the starting position of the particle and the space
dependent survival probability density) and the short time behavior.
That is, the long time approximation loses information about the
initial position and treats the survival probability density as
uniform in space.  Consequently, the long time approximation is
insufficient if one is interested in statistics that depend on these
spatial characteristics.

\comment{Using a multiple time-scale perturbation approach, we develop
  a long time expansion and a corresponding short time correction to
  this expansion of the solution to the diffusion equation in a
  bounded domain containing a small, absorbing spherical trap. The
  long time approximation is derived from the matched asymptotic
  expansions of~\cite{ward93a}, while the short time correction is
  derived by modification of the pseudopotential method used
  in~\cite{IsaacsonRDMELimsII}. Combining these expansions we develop
  a uniformly accurate (in time) approximation of the survival time
  cumulative distribution and the first passage time density.}  To
study the accuracy of our method, we consider a example problem where
the domain and trap are concentric spheres.  By assuming radial
symmetry, we have available for comparison the exact solution to the
example problem.  Our results show excellent quantitative agreement
for all times over a range of physiologically realistic values of
$\epsilon$.  Moreover, they demonstrate the applicability of our
expansions to estimating statistics that depend critically on the
initial position of the diffusing particle.

  \comment{Our approach should also be applicable to two-dimensional
    systems and multiple targets. Pseudopotentials have already been
    used to approximate rates of diffusion limited reactions in
    two-dimensional periodic systems~\cite{GoldsteinPseudoPotent}.
    Likewise, pseudopotentials were originally developed to study
    many-particle scattering
    problems~\cite{HuangYangPhysRev1957,AlbeverioSolvabModelBook,AlbeverioSingPerturbBook}.
    While we are unaware of their use for approximating first passage
    processes in many-body/target systems, it should be feasible to
    adapt the techniques previously used in the quantum mechanical
    scattering context, allowing the extension of our work to
    multi-target systems.}

\section{Acknowledgments}
SAI was supported by NSF grant DMS-0920886.  JMN was supported in part
by the Mathematical Biosciences Institute and the National Science
Foundation under grant DMS-0931642.  We thank the referees for their
helpful comments and suggestions.

\appendix

\renewcommand{\theequation}{\Alph{section}.\arabic{equation}}
\section{Motivation for assumed form of solution to~\eqref{eq:ppModelEq}} \label{ap:ppSolutRep} 
In~\cite{FaddeevPseudPotDef,AlbeverioJFunctAnal,AlbeverioSolvabModelBook,AlbeverioSingPerturbBook}
several approaches for rigorously defining pseudopotential-like
interactions are presented (usually called point interactions or
singular perturbations of the Laplacian in those works). In the
approach
of~\cite{AlbeverioJFunctAnal,AlbeverioSingPerturbBook,AlbeverioSolvabModelBook},
the Laplacian plus point interaction operator, $D \lap + \alpha
\delta(\br)$, is rigorously constructed so as to be equivalent to the
Laplacian with pseudopotential, $D \lap - V$ (see~\eqref{eq:57a} for
the definition of the pseudopotential, $V$,
and~\cite{AlbeverioSolvabModelBook,AlbeverioJFunctAnal} for details on
the construction of $D \lap + \alpha \delta(\br)$). The
splitting~\eqref{eq:33} is rigorously justified by these works, and in
the context of the diffusion equation goes back at least as far
as~\cite{DellAntDiffMovPtSrc}.

We now give a \emph{formal} motivation for the splitting~\eqref{eq:33}
by studying the Laplace transform of~\eqref{eq:ppModelEq}. Again, we
refer to the
references~\cite{AlbeverioJFunctAnal,AlbeverioSingPerturbBook,AlbeverioSolvabModelBook,DellAntDiffMovPtSrc}
for the rigorous justification.  Our analysis is similar to that given
in Section II.A of~\cite{GrossmanFermippJMP84} (where $\Omega =
\R^3$). Denote by $\tilde{g}(s)$ the Laplace transform of a function,
$g(t)$. Taking the Laplace transform of~\eqref{eq:ppModelEq} we find
\begin{align*}
  -D \lap \lst(\br,s) + s \lst(\br,s) = &-V \lst(\br,s) + \delta(\br - \bro) \\&- \psi(\br) \psi(\bro), 
 \end{align*}
for $\br \in \Omega$ and  $s > 0$,
with the Neumann boundary condition that $\partial_{\veta} \lst(\br,s)
= 0$ for $\br \in \partial \Omega$. We assume the coefficient of $\delta(\br-\brb)$
within the pseudopotential term is finite, and subsequently denote it by 
$B(s)$ (as in~\cite{GrossmanFermippJMP84}),
\begin{align*}
  -B(s) \delta(\br - \brb) &\equiv -V \lst(\br,s) \\
  &= -\epsilon   \khat   \pd{}{\abs{\br-\brb}} \Big[ \abs{\br - \brb} \lst(\br,s) \Big]_{\br = \brb}\\
  &\phantom{=} \quad \times \delta(\br-\brb).
\end{align*}
Recalling that $G(\br,\bro,t)$ is the Green's function for the
$\epsilon=0$ problem~\eqref{eq:diffGreensFunct}, we may then write
\begin{align*}
  \lst(\br,s) &= \lG(\br,\bro,s) - \psi(\bro) \int_{\Omega} \lG(\br,\br',s) \psi(\br')    d \br'\\
  &\phantom{=} \quad- B(s) \lG(\br,\brb,s) \\
  &= H(\br,s) - B(s) \lG(\br,\brb,s),
\end{align*}
where $H(\br,s)$ subsequently denotes the first two terms.
Substituting the preceding equation for $\lst(\br,s)$ into the
definition of $B(s)$ we find
\begin{equation*}
  B(s) = \epsilon \khat \brac{ H(\brb,s) -B(s) \paren{ \lR(\brb,\brb,s) - \frac{1}{\khat} \sqrt{\frac{s}{D}}}}.
\end{equation*}
Here we have split $\lG(\br,\br',s)$ into a part that is regular at
$\br=\br'$, $\lR(\br,\br',s)$, and an explicit singular part so that
\begin{equation*}
  \lG(\br,\br',s) = \lR(\br,\br',s) + \frac{e^{-\abs{\br-\br'} \sqrt{\frac{s}{D}}}}{\khat \abs{\br-\br'}}.
\end{equation*}
Solving for $B(s)$ we find
\begin{equation*}
  B(s) = \frac{\epsilon \khat H(\brb,s)}{1 + \epsilon \khat \lR(\brb,\brb,s) - \epsilon \sqrt{\frac{s}{D}}}.
\end{equation*}
Here we see how the pseudopotential corrects the naive point sink
approximation, as given by~\eqref{eq:18}. The addition of the radial
derivative in the definition of $V$ allows the pseudopotential to
remove $r^{-1}$ type singularities in three-dimensions. This allows
the unknown coefficient, $B(s)$, to be determined.

Using the last equation for $B(s)$, we find that
\begin{equation*}
  \lst(\br,s) = H(\br,s) - \frac{\epsilon \khat H(\brb,s) \lG(\br,\brb,s)}{1 + \epsilon \khat \lR(\brb,\brb,s) - \epsilon \sqrt{\frac{s}{D}}}.
\end{equation*}
We may write
\begin{equation} \label{eq:ppSolutRepLT}
  \lst(\br,s) = \tilde{\phi}(\br,s) + \tilde{q}(s) U(\br,\brb),
\end{equation}
where 
\begin{equation*}
  \tilde{q}(s) = - \frac{\epsilon \khat H(\brb,s)}{1 + \epsilon \khat \lR(\brb,\brb,s) - \epsilon \sqrt{\frac{s}{D}}}
\end{equation*}
and
\begin{equation*}
  \tilde{\phi}(\br,s) = H(\br,s) + \tilde{q}(s) \paren{\lG(\br,\brb,s) - U(\br,\brb)}.
\end{equation*}
As the singular part of $U(\br,\brb)$ is $\khat^{-1}
\abs{\br-\brb}^{-1}$~\cite{WardCheviakov2011}, $\tilde{\phi}(\br,s)$
is regular at $\br = \brb$ for $s > 0$. \emph{Formally}, taking an
inverse Laplace transform of~\eqref{eq:ppSolutRepLT} gives the
representation~\eqref{eq:33} of $\st(\br,t)$.

\section{Limit as $t \to \infty$ of $\axrho{2}$}
\label{ap:pst2LongTime} In this appendix we show that as $t \to
\infty$, $\axrho{2}(\br,t) \to 0$ for $\br \neq \brb$. As in the last
appendix, $\tilde{g}(s)$ will denote the Laplace transform of a
function, $g(t)$.  We first collect some basic identities that will
aid in evaluating the limit:
\begin{lemma}
  \begin{align} 
    \int_{\Omega} \axf{2}(\br,\bro\o)    d \br &= \khat^2 \gamma   U(\bro,\brb) + 2 \psibar \sqrt{\abs{\Omega}} \notag \\
&\phantom{=} - \frac{\khat^2}{\abs{\Omega}} \int_{\Omega} U(\bro,\br') U(\br',\brb)   d \br'
     , \label{eq:w2int} \\
     \int_{\Omega} U(\bro,\br') U(\br',\brb)   d \br' &= 
     \lim_{s \to 0} \frac{U(\bro,\brb) - \laxphi{0}(\brb,s)}{s}, \label{eq:uConvInt} \\
     \int_{\Omega} \paren{U(\br,\brb)}^2   d \br &= 
     \lim_{s \to 0} \int_{\Omega} \lG(\brb,\br',s) U(\br',\brb)   d \br'. \label{eq:uSqInt}
  \end{align}
\end{lemma}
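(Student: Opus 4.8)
The plan is to prove the three identities in turn, relying on three ingredients already in hand: the time-integral representation \eqref{eq:29}, $U(\br,\br')=\int_0^\infty G_0(\br,\br',t)\,dt$ with $G_0\equiv G-\abs{\Omega}^{-1}$; the normalization \eqref{eq:28b}, $\int_\Omega U(\br,\br')\,d\br=0$, together with the symmetry $U(\br,\br')=U(\br',\br)$ inherited from self-adjointness of the Neumann Laplacian; and the semigroup property of the heat kernel, $\int_\Omega G(\br,\br',u)G(\br',\br'',v)\,d\br'=G(\br,\br'',u+v)$, with conservation of mass $\int_\Omega G(\br,\br',t)\,d\br'=1$. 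For \eqref{eq:w2int} I would integrate the explicit formula \eqref{eq:f2} for $\axf{2}(\br,\bro)$ term by term over $\br\in\Omega$. The constant term contributes $2\psibar\sqrt{\abs{\Omega}}$; every term carrying a bare $U(\br,\brb)$ or $U(\br,\br')$ integrated against $d\br$ vanishes by \eqref{eq:28b} (using symmetry to move the integration variable into the first slot where needed); the term $\khat^2\reggf\abs{\Omega}^{-1}U(\bro,\brb)$ survives to give $\khat^2\reggf U(\bro,\brb)=\khat^2\gamma U(\bro,\brb)$; and the surviving half of the double integral yields $-\khat^2\abs{\Omega}^{-1}\int_\Omega U(\bro,\br')U(\br',\brb)\,d\br'$. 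This is pure bookkeeping with no real obstacle.

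For \eqref{eq:uConvInt} I would first note that $\axphi{0}(\brb,t)=G_0(\brb,\bro,t)$ by \eqref{eq:53}, so $\laxphi{0}(\brb,s)$ is the Laplace transform of $G_0(\brb,\bro,\cdot)$ and, by \eqref{eq:29}, $\laxphi{0}(\brb,0)=U(\bro,\brb)$. Hence the difference quotient on the right is $-\partial_s\laxphi{0}(\brb,s)\big|_{s=0}=\int_0^\infty t\,G_0(\brb,\bro,t)\,dt$. To match this to the left side I would write both $U(\bro,\br')$ and $U(\br',\brb)$ as time integrals of $G_0$ via \eqref{eq:29}, interchange the spatial and temporal integrations, and collapse the inner spatial integral using the semigroup identity for $G_0$. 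The latter follows from the one for $G$ once the $\abs{\Omega}^{-1}$ cross terms cancel via $\int_\Omega G\,d\br'=1$, giving $\int_\Omega G_0(\bro,\br',u)G_0(\br',\brb,v)\,d\br'=G_0(\bro,\brb,u+v)$. The change of variables $\int_0^\infty\!\int_0^\infty h(u+v)\,du\,dv=\int_0^\infty t\,h(t)\,dt$ then reproduces exactly $\int_0^\infty t\,G_0(\bro,\brb,t)\,dt$.

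For \eqref{eq:uSqInt} I would split $\lG(\brb,\br',s)=\tilde{G}_0(\brb,\br',s)+(s\abs{\Omega})^{-1}$. The singular piece $(s\abs{\Omega})^{-1}$ integrates against $U(\br',\brb)$ to zero by \eqref{eq:28b}, leaving $\int_\Omega\tilde{G}_0(\brb,\br',s)U(\br',\brb)\,d\br'$, which converges as $s\to 0$ to $\int_\Omega U(\brb,\br')U(\br',\brb)\,d\br'=\int_\Omega(U(\br,\brb))^2\,d\br$ by \eqref{eq:29} and symmetry.

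The main obstacle is not the algebra but justifying the analytic manipulations in the second and third identities: interchanging the spatial integral with the double time integral, differentiating the Laplace transform under the integral sign at $s=0$, and passing the $s\to 0$ limit inside the spatial integral. Each requires controlling the short-time singularity of $G_0(\br,\brb,t)$, which scales like $t^{-3/2}$ as $t\to 0$ at coincident arguments, and the $\abs{\br-\brb}^{-1}$ singularity of $U$ near $\br=\brb$. I would resolve these by observing that $\int_0^\infty t\,G_0(\brb,\bro,t)\,dt$ converges (the $t^{-3/2}$ blow-up occurs only when $\bro=\brb$, where the weight $t$ renders it integrable, while the long-time decay is exponential since $\mu_1>0$) and that $U\in L^2(\Omega)$ in three dimensions, so that Tonelli and dominated convergence legitimize the interchanges and limit passages.
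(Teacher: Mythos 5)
Your proposal is correct and takes essentially the same route as the paper's own proof: \eqref{eq:w2int} by integrating \eqref{eq:f2} term by term against the normalization \eqref{eq:28b}, \eqref{eq:uConvInt} by converting the $U$ factors to time integrals of $G_0$ via \eqref{eq:29} and collapsing the spatial integral with the semigroup property, and \eqref{eq:uSqInt} by discarding the $(s\abs{\Omega})^{-1}$ pole via \eqref{eq:28b} and passing to the $s \to 0$ limit by dominated convergence. The only difference is presentational: for \eqref{eq:uConvInt} you evaluate both sides to the first moment $\int_0^\infty t\, G_0(\bro,\brb,t)\,dt$, whereas the paper rewrites the left side as $\lim_{s\to 0}\int_0^\infty \paren{U(\bro,\brb)-\int_0^t \axphi{0}(\brb,s')\,ds'}e^{-st}\,dt$ and recognizes the Laplace transform --- the same computation viewed from the other end, with your version being, if anything, more explicit about the integrability and interchange justifications that the paper leaves implicit.
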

\begin{proof}
  The first identity follows immediately from the definition of
  $\axf{2}$~\eqref{eq:f2} and~\eqref{eq:28b}. In the right hand side
  of~\eqref{eq:uConvInt} we replace the $U$ terms with time integrals
  of $G$ by~\eqref{eq:29}, switch the order of integration, and
  evaluate the spatial integral using the semigroup property of $G$ to
  find that
  \begin{multline} \label{eq:uConvInt2}
    \int_{\Omega} U(\bro,\br') U(\br',\brb)   d \br' = \\
    \int_0^{\infty} \int_{t}^{\infty} \brac{G(\bro,\brb,s) - \frac{1}{\abs{\Omega}}}   ds   dt. 
  \end{multline}
  As
  \begin{multline*}
    \int_{t}^{\infty} \brac{G(\bro,\brb,s) - \frac{1}{\abs{\Omega}}}   ds = \\
    U(\bro,\brb) - \int_{0}^{t} \brac{G(\bro,\brb,s) - \frac{1}{\abs{\Omega}}}   ds,
  \end{multline*}
  recalling the definition of $\axphi{0}(\brb,s)$~\eqref{eq:53} we
  see that
  \begin{align*}
    \int_{\Omega} &U(\bro,\br') U(\br',\brb)   d \br' \\
    &=  \int_0^{\infty} \paren{ U(\bro,\brb) - \int_{0}^{t} \axphi{0}(\brb,s')   ds'}  dt \\
    &= \lim_{s \to 0} \int_0^{\infty} \paren{ U(\bro,\brb) - \int_{0}^{t} \axphi{0}(\brb,s')   ds'}  e^{-s t} dt.
  \end{align*}
  \eqref{eq:uConvInt} then follows by definition of the Laplace transform. 

  Finally, by~\eqref{eq:28b} we have that 
  \begin{equation*}
    \int_{\Omega} \lG(\brb,\br',s) U(\br',\brb)   d \br' = 
    \int_{\Omega} \lG_0(\brb,\br',s) U(\br',\brb)   d \br'.
  \end{equation*}
  Using~\eqref{eq:29}, we have that $\lim_{s \to 0} \lG_0(\brb,\br',s)
  = U(\brb,\br')$. A dominated convergence argument then
  implies~\eqref{eq:uSqInt}.
\end{proof}

We are now ready to evaluate the limit of $\axrho{2}(\br,t)$ as $t \to
\infty$. By dominated convergence and~\eqref{eq:29}, it is immediate
from~\eqref{eq:st2} that
\begin{align}
  \lim_{t \to \infty} \axrho{2}(\br,t) &= \frac{\khat^2 \gamma}{\abs{\Omega}} U(\brb,\bro)
  - \frac{\khat}{\abs{\Omega}} \int_0^{\infty} \axphi{1}(\brb,s)   ds \notag \\
  &\phantom{=}\quad - \frac{1}{\abs{\Omega}} \int_{\Omega} \axf{2}(\br',\bro)   d \br', \notag \\
  &=   - \frac{\khat}{\abs{\Omega}} \int_0^{\infty} \axphi{1}(\brb,s)   ds - \frac{2 \psibar}{\sqrt{\abs{\Omega}}} \notag \\
  &\phantom{=}\quad + \frac{\khat^2}{\abs{\Omega}^2} \int_{\Omega} U(\bro,\br') U(\br',\brb)   d \br', \label{eq:st2lim}
\end{align}
where the last line follows by~\eqref{eq:w2int}. By definition of the Laplace
transform, 
\begin{equation*}
  \int_0^{\infty} \axphi{1}(\brb,s)   ds= \lim_{s \to 0} \lim_{\br \to \brb} \laxphi{1}(\br,s).
\end{equation*}
From the definition of $\axphi{1}(\br,t)$~\eqref{eq:54} we find that 
\begin{align*}
  \laxphi{1}(\br,s) &= \khat \laxphi{0}(\brb,s) \brac{U(\br,\brb) - \lG_0(\br,\brb,s)} \\
  &\phantom{=}\quad +\frac{\khat}{\abs{\Omega} s} \brac{ U(\bro,\brb) - \laxphi{0}(\brb,s)} \\
    &\phantom{=}\quad  + \frac{\khat}{\abs{\Omega}} \int_{\Omega} \lG(\br,\br',s) U(\br',\brb)   d \br'.
\end{align*}
Substituting into~\eqref{eq:st2lim}, and using~\eqref{eq:uConvInt},
\eqref{eq:uSqInt}, and the definition of
$\psibar$~\eqref{eq:princEfuncNormalize} we find
\begin{multline} \label{eq:st2LimRep}
  \lim_{t \to \infty} \axrho{2}(\br,t) = -\frac{\khat^2 U(\brb,\bro)}{\abs{\Omega}} \\
   \times \lim_{s \to 0} \lim_{\br \to \brb} \brac{U(\br,\brb) - \lG_0(\br,\brb,s)}.
\end{multline}
The limit of the bracketed term can be evaluated by splitting $U$ and $\lG_0$ into
regular and singular parts (at $\br = \brb$). We write that
\begin{equation*}
  G_0(\br,\brb,t) = R_0(\br,\brb,t) + \frac{1}{\paren{4 \pi D t}^{3/2}} e^{-\abs{\br - \brb}/4 D t},
\end{equation*}
where $R_0(\brb,\brb,t)$ is finite as $t \to 0$. Using~\eqref{eq:29} we see that
\begin{equation*}
  U(\br,\brb) = \lR_0(\br,\brb,0) + \frac{1}{\khat \abs{\br - \brb}},
\end{equation*}
where $\lR_0(\br,\brb,s)$ denotes the Laplace transform of $R$. As such,
\begin{align*}
  \lim_{\br \to \brb} \brac{U(\br,\brb) - \lG_0(\br,\brb,s)} &= 
  \lR_0(\brb,\brb,0) \\
    &\phantom{=} - \lR_0(\brb,\brb,s) + \frac{1}{\khat} \sqrt{\frac{s}{D}}.
\end{align*}
Evaluating the $s$ limit in~\eqref{eq:st2LimRep}, it follows that as
$t \to 0$, $\axrho{2}(\br,t) \to 0$.

\section{Spherically-symmetric Neumann Green's function}
\label{S:neuGFAppendix} Let $g(r,r_0,t)$ denote the spherically
symmetric solution to the diffusion equation, satisfying
\begin{align*}
  \PD{g}{t} &= D \frac{1}{r^2} \PD{}{r} \brac{r^2 \PD{g}{r}}, \quad r \in \left[0,1 \right),\\
  \PD{g}{r} &= 0, \quad r = 1,
\end{align*}
with the initial condition that $g(r,r_0,0) = \delta(r-r_0) / r^2$. With this choice,
\begin{equation*}
  g(r,r_0,t) = \iint_{\partial B_{1}(\vec{0})} G(\br,\br_0,t)   dS,
\end{equation*}
for $\partial B_1(\vec{0})$ the boundary of the unit sphere.  Here
$G(\br,\br_0,t)$ denotes the solution to the corresponding three
dimensional diffusion equation~\eqref{eq:diffGreensFunct}.  Note also
the normalization that
\begin{equation*}
  \int_0^1 g(r,r_0,t) r^2   dr = 1 = \iiint_{\Omega} G(\br,\br_0,t)   d \br.
\end{equation*}
By eigenfunction expansion we find 
\begin{multline*} 
  g(r,r_0,t) = 3\\ + 2 \sum_{n=1}^{\infty} \paren{1 + \frac{\mu_n}{D}} \sinc (\sqrt{\frac{\mu_n}{D}} r)
  \sinc (\sqrt{\frac{\mu_n}{D}} r_0) e^{-\mu_n t},
\end{multline*}
where the eigenvalues $\mu_n$ satisfy~\eqref{eq:60} and we use the
convention that
\begin{equation*}
  \sinc(x) = \frac{\sin(x)}{x}.
\end{equation*}

\section{Numerics} \label{S:numericsPoints} When evaluating the series
for the exact solution~\eqref{eq:52} and asymptotic
approximation~\eqref{eq:13}, we sum until the magnitude of the last
added term drops below a given error threshold.  We used an error
threshold of $10^{-14}$ for the exact solution and $10^{-7}$ for the
uniform approximation.  The figures are generated with 1000
equally-spaced points for $10^{-3}\leq t \leq 1$ and 500 equally-spaced points for $1 < t < 30$.

\begin{thebibliography}{34}%
\makeatletter
\providecommand \@ifxundefined [1]{%
 \@ifx{#1\undefined}
}%
\providecommand \@ifnum [1]{%
 \ifnum #1\expandafter \@firstoftwo
 \else \expandafter \@secondoftwo
 \fi
}%
\providecommand \@ifx [1]{%
 \ifx #1\expandafter \@firstoftwo
 \else \expandafter \@secondoftwo
 \fi
}%
\providecommand \natexlab [1]{#1}%
\providecommand \enquote  [1]{``#1''}%
\providecommand \bibnamefont  [1]{#1}%
\providecommand \bibfnamefont [1]{#1}%
\providecommand \citenamefont [1]{#1}%
\providecommand \href@noop [0]{\@secondoftwo}%
\providecommand \href [0]{\begingroup \@sanitize@url \@href}%
\providecommand \@href[1]{\@@startlink{#1}\@@href}%
\providecommand \@@href[1]{\endgroup#1\@@endlink}%
\providecommand \@sanitize@url [0]{\catcode `\\12\catcode `\$12\catcode
  `\&12\catcode `\#12\catcode `\^12\catcode `\_12\catcode `\%12\relax}%
\providecommand \@@startlink[1]{}%
\providecommand \@@endlink[0]{}%
\providecommand \url  [0]{\begingroup\@sanitize@url \@url }%
\providecommand \@url [1]{\endgroup\@href {#1}{\urlprefix }}%
\providecommand \urlprefix  [0]{URL }%
\providecommand \Eprint [0]{\href }%
\providecommand \doibase [0]{http://dx.doi.org/}%
\providecommand \selectlanguage [0]{\@gobble}%
\providecommand \bibinfo  [0]{\@secondoftwo}%
\providecommand \bibfield  [0]{\@secondoftwo}%
\providecommand \translation [1]{[#1]}%
\providecommand \BibitemOpen [0]{}%
\providecommand \bibitemStop [0]{}%
\providecommand \bibitemNoStop [0]{.\EOS\space}%
\providecommand \EOS [0]{\spacefactor3000\relax}%
\providecommand \BibitemShut  [1]{\csname bibitem#1\endcsname}%
\let\auto@bib@innerbib\@empty
\bibitem [{\citenamefont {Bressloff}\ and\ \citenamefont
  {Newby}(2013)}]{bressloff13a}%
  \BibitemOpen
  \bibfield  {author} {\bibinfo {author} {\bibfnamefont {P.~C.}\ \bibnamefont
  {Bressloff}}\ and\ \bibinfo {author} {\bibfnamefont {J.~M.}\ \bibnamefont
  {Newby}},\ }\href {\doibase 10.1103/RevModPhys.85.135} {\bibfield  {journal}
  {\bibinfo  {journal} {Rev. Mod. Phys.}\ }\textbf {\bibinfo {volume} {85}},\
  \bibinfo {pages} {135} (\bibinfo {year} {2013})}\BibitemShut {NoStop}%
\bibitem [{\citenamefont {Ward}\ and\ \citenamefont {Keller}(1993)}]{ward93a}%
  \BibitemOpen
  \bibfield  {author} {\bibinfo {author} {\bibfnamefont {M.~J.}\ \bibnamefont
  {Ward}}\ and\ \bibinfo {author} {\bibfnamefont {J.~B.}\ \bibnamefont
  {Keller}},\ }\href@noop {} {\bibfield  {journal} {\bibinfo  {journal} {{SIAM}
  J. Appl. Math}\ }\textbf {\bibinfo {volume} {53}},\ \bibinfo {pages} {770}
  (\bibinfo {year} {1993})}\BibitemShut {NoStop}%
\bibitem [{\citenamefont {Condamin}\ \emph {et~al.}(2007)\citenamefont
  {Condamin}, \citenamefont {B\'enichou},\ and\ \citenamefont
  {Moreau}}]{condamin07c}%
  \BibitemOpen
  \bibfield  {author} {\bibinfo {author} {\bibfnamefont {S.}~\bibnamefont
  {Condamin}}, \bibinfo {author} {\bibfnamefont {O.}~\bibnamefont
  {B\'enichou}}, \ and\ \bibinfo {author} {\bibfnamefont {M.}~\bibnamefont
  {Moreau}},\ }\href@noop {} {\bibfield  {journal} {\bibinfo  {journal} {Phys.
  Rev. E}\ }\textbf {\bibinfo {volume} {75}},\ \bibinfo {pages} {021111}
  (\bibinfo {year} {2007})}\BibitemShut {NoStop}%
\bibitem [{\citenamefont {Schuss}\ \emph {et~al.}(2007)\citenamefont {Schuss},
  \citenamefont {Singer},\ and\ \citenamefont {Holcman}}]{schuss07a}%
  \BibitemOpen
  \bibfield  {author} {\bibinfo {author} {\bibfnamefont {Z.}~\bibnamefont
  {Schuss}}, \bibinfo {author} {\bibfnamefont {A.}~\bibnamefont {Singer}}, \
  and\ \bibinfo {author} {\bibfnamefont {D.}~\bibnamefont {Holcman}},\
  }\href@noop {} {\bibfield  {journal} {\bibinfo  {journal} {PNAS}\ }\textbf
  {\bibinfo {volume} {104}},\ \bibinfo {pages} {16098} (\bibinfo {year}
  {2007})}\BibitemShut {NoStop}%
\bibitem [{\citenamefont {Coombs}\ \emph {et~al.}(2009)\citenamefont {Coombs},
  \citenamefont {Straube},\ and\ \citenamefont {Ward}}]{coombs09a}%
  \BibitemOpen
  \bibfield  {author} {\bibinfo {author} {\bibfnamefont {D.}~\bibnamefont
  {Coombs}}, \bibinfo {author} {\bibfnamefont {R.}~\bibnamefont {Straube}}, \
  and\ \bibinfo {author} {\bibfnamefont {M.}~\bibnamefont {Ward}},\ }\href@noop
  {} {\bibfield  {journal} {\bibinfo  {journal} {{SIAM} J. Appl. Math.}\
  }\textbf {\bibinfo {volume} {70}},\ \bibinfo {pages} {302} (\bibinfo {year}
  {2009})}\BibitemShut {NoStop}%
\bibitem [{\citenamefont {Pillay}\ \emph {et~al.}(2010)\citenamefont {Pillay},
  \citenamefont {Ward}, \citenamefont {Peirce},\ and\ \citenamefont
  {Kolokolnikov}}]{pillay10a}%
  \BibitemOpen
  \bibfield  {author} {\bibinfo {author} {\bibfnamefont {S.}~\bibnamefont
  {Pillay}}, \bibinfo {author} {\bibfnamefont {M.~J.}\ \bibnamefont {Ward}},
  \bibinfo {author} {\bibfnamefont {A.}~\bibnamefont {Peirce}}, \ and\ \bibinfo
  {author} {\bibfnamefont {T.}~\bibnamefont {Kolokolnikov}},\ }\href@noop {}
  {\bibfield  {journal} {\bibinfo  {journal} {Multiscale Model. Simul.}\
  }\textbf {\bibinfo {volume} {8}},\ \bibinfo {pages} {803} (\bibinfo {year}
  {2010})}\BibitemShut {NoStop}%
\bibitem [{\citenamefont {Cheviakov}\ \emph {et~al.}(2010)\citenamefont
  {Cheviakov}, \citenamefont {Ward},\ and\ \citenamefont
  {Straube}}]{cheviakov10b}%
  \BibitemOpen
  \bibfield  {author} {\bibinfo {author} {\bibfnamefont {A.~F.}\ \bibnamefont
  {Cheviakov}}, \bibinfo {author} {\bibfnamefont {M.~J.}\ \bibnamefont {Ward}},
  \ and\ \bibinfo {author} {\bibfnamefont {R.}~\bibnamefont {Straube}},\
  }\href@noop {} {\bibfield  {journal} {\bibinfo  {journal} {Multiscale Model.
  Simul.}\ }\textbf {\bibinfo {volume} {8}},\ \bibinfo {pages} {836} (\bibinfo
  {year} {2010})}\BibitemShut {NoStop}%
\bibitem [{\citenamefont {Chevalier}\ \emph {et~al.}(2011)\citenamefont
  {Chevalier}, \citenamefont {B\'enichou}, \citenamefont {Meyer},\ and\
  \citenamefont {Voituriez}}]{chevalier11a}%
  \BibitemOpen
  \bibfield  {author} {\bibinfo {author} {\bibfnamefont {C.}~\bibnamefont
  {Chevalier}}, \bibinfo {author} {\bibfnamefont {O.}~\bibnamefont
  {B\'enichou}}, \bibinfo {author} {\bibfnamefont {B.}~\bibnamefont {Meyer}}, \
  and\ \bibinfo {author} {\bibfnamefont {R.}~\bibnamefont {Voituriez}},\
  }\href@noop {} {\bibfield  {journal} {\bibinfo  {journal} {Journal of Physics
  A: Mathematical and Theoretical}\ }\textbf {\bibinfo {volume} {44}},\
  \bibinfo {pages} {025002} (\bibinfo {year} {2011})}\BibitemShut {NoStop}%
\bibitem [{\citenamefont {Holcman}\ and\ \citenamefont
  {Schuss}(2004)}]{holcman04c}%
  \BibitemOpen
  \bibfield  {author} {\bibinfo {author} {\bibfnamefont {D.}~\bibnamefont
  {Holcman}}\ and\ \bibinfo {author} {\bibfnamefont {Z.}~\bibnamefont
  {Schuss}},\ }\href@noop {} {\bibfield  {journal} {\bibinfo  {journal} {J.
  Stat. Phys.}\ }\textbf {\bibinfo {volume} {117}},\ \bibinfo {pages} {975}
  (\bibinfo {year} {2004})}\BibitemShut {NoStop}%
\bibitem [{\citenamefont {Bressloff}\ \emph {et~al.}(2007)\citenamefont
  {Bressloff}, \citenamefont {Earnshaw},\ and\ \citenamefont
  {Ward}}]{bressloff07a}%
  \BibitemOpen
  \bibfield  {author} {\bibinfo {author} {\bibfnamefont {P.~C.}\ \bibnamefont
  {Bressloff}}, \bibinfo {author} {\bibfnamefont {B.~A.}\ \bibnamefont
  {Earnshaw}}, \ and\ \bibinfo {author} {\bibfnamefont {M.~J.}\ \bibnamefont
  {Ward}},\ }\href@noop {} {\bibfield  {journal} {\bibinfo  {journal} {{SIAM}
  J. Appl. Math.}\ }\textbf {\bibinfo {volume} {68}},\ \bibinfo {pages} {1223}
  (\bibinfo {year} {2007})}\BibitemShut {NoStop}%
\bibitem [{\citenamefont {Lagache}\ and\ \citenamefont
  {Holcman}(2007)}]{lagache07b}%
  \BibitemOpen
  \bibfield  {author} {\bibinfo {author} {\bibfnamefont {T.}~\bibnamefont
  {Lagache}}\ and\ \bibinfo {author} {\bibfnamefont {D.}~\bibnamefont
  {Holcman}},\ }\href@noop {} {\bibfield  {journal} {\bibinfo  {journal}
  {{SIAM} J. Appl. Math.}\ }\textbf {\bibinfo {volume} {68}},\ \bibinfo {pages}
  {1146} (\bibinfo {year} {2007})}\BibitemShut {NoStop}%
\bibitem [{\citenamefont {Bressloff}\ and\ \citenamefont
  {Newby}(2011)}]{bressloff11a}%
  \BibitemOpen
  \bibfield  {author} {\bibinfo {author} {\bibfnamefont {P.~C.}\ \bibnamefont
  {Bressloff}}\ and\ \bibinfo {author} {\bibfnamefont {J.~M.}\ \bibnamefont
  {Newby}},\ }\href@noop {} {\bibfield  {journal} {\bibinfo  {journal} {Phys.
  Rev. E}\ }\textbf {\bibinfo {volume} {83}},\ \bibinfo {pages} {061139}
  (\bibinfo {year} {2011})}\BibitemShut {NoStop}%
\bibitem [{\citenamefont {Isaacson}\ \emph {et~al.}(2011)\citenamefont
  {Isaacson}, \citenamefont {McQueen},\ and\ \citenamefont
  {Peskin}}]{IsaacsonPNAS2011}%
  \BibitemOpen
  \bibfield  {author} {\bibinfo {author} {\bibfnamefont {S.~A.}\ \bibnamefont
  {Isaacson}}, \bibinfo {author} {\bibfnamefont {D.~M.}\ \bibnamefont
  {McQueen}}, \ and\ \bibinfo {author} {\bibfnamefont {C.~S.}\ \bibnamefont
  {Peskin}},\ }\href@noop {} {\bibfield  {journal} {\bibinfo  {journal} {PNAS}\
  }\textbf {\bibinfo {volume} {108}},\ \bibinfo {pages} {3815} (\bibinfo {year}
  {2011})}\BibitemShut {NoStop}%
\bibitem [{\citenamefont {Cullen}(2003)}]{CullenTRENDS2003}%
  \BibitemOpen
  \bibfield  {author} {\bibinfo {author} {\bibfnamefont {B.~R.}\ \bibnamefont
  {Cullen}},\ }\href@noop {} {\bibfield  {journal} {\bibinfo  {journal} {TRENDS
  Biochem. Sci.}\ }\textbf {\bibinfo {volume} {28}},\ \bibinfo {pages} {419}
  (\bibinfo {year} {2003})}\BibitemShut {NoStop}%
\bibitem [{\citenamefont {Cheviakov}\ and\ \citenamefont
  {Ward}(2011)}]{WardCheviakov2011}%
  \BibitemOpen
  \bibfield  {author} {\bibinfo {author} {\bibfnamefont {A.~F.}\ \bibnamefont
  {Cheviakov}}\ and\ \bibinfo {author} {\bibfnamefont {M.~J.}\ \bibnamefont
  {Ward}},\ }\href@noop {} {\bibfield  {journal} {\bibinfo  {journal}
  {Mathematical and Computer Modelling}\ }\textbf {\bibinfo {volume} {53}},\
  \bibinfo {pages} {1394} (\bibinfo {year} {2011})}\BibitemShut {NoStop}%
\bibitem [{\citenamefont {Gardiner}(1996)}]{GardinerHANDBOOKSTOCH}%
  \BibitemOpen
  \bibfield  {author} {\bibinfo {author} {\bibfnamefont {C.~W.}\ \bibnamefont
  {Gardiner}},\ }\href@noop {} {\emph {\bibinfo {title} {Handbook of Stochastic
  Methods: For Physics, Chemistry, and the Natural Sciences}}},\ \bibinfo
  {edition} {2nd}\ ed.,\ \bibinfo {series} {Springer Series in Synergetics},
  Vol.~\bibinfo {volume} {13}\ (\bibinfo  {publisher} {Springer Verlag},\
  \bibinfo {address} {New York},\ \bibinfo {year} {1996})\BibitemShut {NoStop}%
\bibitem [{\citenamefont {Van~Kampen}(2001)}]{VanKampenSTOCHPROCESSINPHYS}%
  \BibitemOpen
  \bibfield  {author} {\bibinfo {author} {\bibfnamefont {N.~G.}\ \bibnamefont
  {Van~Kampen}},\ }\href@noop {} {\emph {\bibinfo {title} {Stochastic Processes
  in Physics and Chemistry}}}\ (\bibinfo  {publisher} {North-Holland},\
  \bibinfo {address} {Amsterdam},\ \bibinfo {year} {2001})\BibitemShut
  {NoStop}%
\bibitem [{\citenamefont {Gardiner}\ \emph {et~al.}(1976)\citenamefont
  {Gardiner}, \citenamefont {McNeil}, \citenamefont {Walls},\ and\
  \citenamefont {Matheson}}]{GardinerRXDIFFME}%
  \BibitemOpen
  \bibfield  {author} {\bibinfo {author} {\bibfnamefont {C.~W.}\ \bibnamefont
  {Gardiner}}, \bibinfo {author} {\bibfnamefont {K.~J.}\ \bibnamefont
  {McNeil}}, \bibinfo {author} {\bibfnamefont {D.~F.}\ \bibnamefont {Walls}}, \
  and\ \bibinfo {author} {\bibfnamefont {I.~S.}\ \bibnamefont {Matheson}},\
  }\href@noop {} {\bibfield  {journal} {\bibinfo  {journal} {J. Stat. Phys.}\
  }\textbf {\bibinfo {volume} {14}},\ \bibinfo {pages} {307} (\bibinfo {year}
  {1976})}\BibitemShut {NoStop}%
\bibitem [{\citenamefont {B{\'e}nichou}\ \emph {et~al.}(2010)\citenamefont
  {B{\'e}nichou}, \citenamefont {Chevalier}, \citenamefont {Klafter},
  \citenamefont {Meyer},\ and\ \citenamefont {Voituriez}}]{Benichou2010gb}%
  \BibitemOpen
  \bibfield  {author} {\bibinfo {author} {\bibfnamefont {O.}~\bibnamefont
  {B{\'e}nichou}}, \bibinfo {author} {\bibfnamefont {C.}~\bibnamefont
  {Chevalier}}, \bibinfo {author} {\bibfnamefont {J.}~\bibnamefont {Klafter}},
  \bibinfo {author} {\bibfnamefont {B.}~\bibnamefont {Meyer}}, \ and\ \bibinfo
  {author} {\bibfnamefont {R.}~\bibnamefont {Voituriez}},\ }\href@noop {}
  {\bibfield  {journal} {\bibinfo  {journal} {Nat. Chem.}\ }\textbf {\bibinfo
  {volume} {2}},\ \bibinfo {pages} {472} (\bibinfo {year} {2010})}\BibitemShut
  {NoStop}%
\bibitem [{\citenamefont {Mej{\' i}a-Monasterio}\ \emph
  {et~al.}(2011)\citenamefont {Mej{\' i}a-Monasterio}, \citenamefont
  {Oshanin},\ and\ \citenamefont {Schehr}}]{mejia-monasterio11a}%
  \BibitemOpen
  \bibfield  {author} {\bibinfo {author} {\bibfnamefont {C.}~\bibnamefont
  {Mej{\' i}a-Monasterio}}, \bibinfo {author} {\bibfnamefont {G.}~\bibnamefont
  {Oshanin}}, \ and\ \bibinfo {author} {\bibfnamefont {G.}~\bibnamefont
  {Schehr}},\ }\href@noop {} {\bibfield  {journal} {\bibinfo  {journal} {J.
  Stat. Mech.-Theory Exp.}\ }\textbf {\bibinfo {volume} {2011}},\ \bibinfo
  {pages} {P06022} (\bibinfo {year} {2011})}\BibitemShut {NoStop}%
\bibitem [{\citenamefont {Mattos}\ \emph {et~al.}(2012)\citenamefont {Mattos},
  \citenamefont {Mej{\' i}a-Monasterio}, \citenamefont {Metzler},\ and\
  \citenamefont {Oshanin}}]{mattos12a}%
  \BibitemOpen
  \bibfield  {author} {\bibinfo {author} {\bibfnamefont {T.~G.}\ \bibnamefont
  {Mattos}}, \bibinfo {author} {\bibfnamefont {C.}~\bibnamefont {Mej{\'
  i}a-Monasterio}}, \bibinfo {author} {\bibfnamefont {R.}~\bibnamefont
  {Metzler}}, \ and\ \bibinfo {author} {\bibfnamefont {G.}~\bibnamefont
  {Oshanin}},\ }\href@noop {} {\bibfield  {journal} {\bibinfo  {journal} {Phys.
  Rev. E}\ }\textbf {\bibinfo {volume} {86}},\ \bibinfo {pages} {031143}
  (\bibinfo {year} {2012})}\BibitemShut {NoStop}%
\bibitem [{\citenamefont {Meyer}\ \emph {et~al.}(2011)\citenamefont {Meyer},
  \citenamefont {Chevalier}, \citenamefont {Voituriez},\ and\ \citenamefont
  {B{\'e}nichou}}]{Benichou2011a}%
  \BibitemOpen
  \bibfield  {author} {\bibinfo {author} {\bibfnamefont {B.}~\bibnamefont
  {Meyer}}, \bibinfo {author} {\bibfnamefont {C.}~\bibnamefont {Chevalier}},
  \bibinfo {author} {\bibfnamefont {R.}~\bibnamefont {Voituriez}}, \ and\
  \bibinfo {author} {\bibfnamefont {O.}~\bibnamefont {B{\'e}nichou}},\
  }\href@noop {} {\bibfield  {journal} {\bibinfo  {journal} {Phys. Rev. E}\
  }\textbf {\bibinfo {volume} {83}},\ \bibinfo {pages} {051116} (\bibinfo
  {year} {2011})}\BibitemShut {NoStop}%
\bibitem [{\citenamefont {Isaacson}\ and\ \citenamefont
  {Isaacson}(2009)}]{IsaacsonRDMELimsII}%
  \BibitemOpen
  \bibfield  {author} {\bibinfo {author} {\bibfnamefont {S.~A.}\ \bibnamefont
  {Isaacson}}\ and\ \bibinfo {author} {\bibfnamefont {D.}~\bibnamefont
  {Isaacson}},\ }\href@noop {} {\bibfield  {journal} {\bibinfo  {journal}
  {Phys. Rev. E}\ }\textbf {\bibinfo {volume} {80}},\ \bibinfo {pages} {066106
  (9pp)} (\bibinfo {year} {2009})}\BibitemShut {NoStop}%
\bibitem [{\citenamefont {Fermi}(1936)}]{FermiPseudoPot}%
  \BibitemOpen
  \bibfield  {author} {\bibinfo {author} {\bibfnamefont {E.}~\bibnamefont
  {Fermi}},\ }\href@noop {} {\bibfield  {journal} {\bibinfo  {journal} {Ricerca
  sci.}\ }\textbf {\bibinfo {volume} {7}},\ \bibinfo {pages} {13} (\bibinfo
  {year} {1936})}\BibitemShut {NoStop}%
\bibitem [{\citenamefont {Huang}\ and\ \citenamefont
  {Yang}(1957)}]{HuangYangPhysRev1957}%
  \BibitemOpen
  \bibfield  {author} {\bibinfo {author} {\bibfnamefont {K.}~\bibnamefont
  {Huang}}\ and\ \bibinfo {author} {\bibfnamefont {C.~N.}\ \bibnamefont
  {Yang}},\ }\href@noop {} {\bibfield  {journal} {\bibinfo  {journal} {Phys.
  Rev.}\ }\textbf {\bibinfo {volume} {105}},\ \bibinfo {pages} {767} (\bibinfo
  {year} {1957})}\BibitemShut {NoStop}%
\bibitem [{\citenamefont {Torney}\ and\ \citenamefont
  {Goldstein}(1987)}]{GoldsteinPseudoPotent}%
  \BibitemOpen
  \bibfield  {author} {\bibinfo {author} {\bibfnamefont {D.~C.}\ \bibnamefont
  {Torney}}\ and\ \bibinfo {author} {\bibfnamefont {B.}~\bibnamefont
  {Goldstein}},\ }\href@noop {} {\bibfield  {journal} {\bibinfo  {journal} {J.
  Stat. Phys.}\ }\textbf {\bibinfo {volume} {49}},\ \bibinfo {pages} {725}
  (\bibinfo {year} {1987})}\BibitemShut {NoStop}%
\bibitem [{\citenamefont {Berezin}\ and\ \citenamefont
  {Faddeev}(1961)}]{FaddeevPseudPotDef}%
  \BibitemOpen
  \bibfield  {author} {\bibinfo {author} {\bibfnamefont {F.}~\bibnamefont
  {Berezin}}\ and\ \bibinfo {author} {\bibfnamefont {L.}~\bibnamefont
  {Faddeev}},\ }\href@noop {} {\bibfield  {journal} {\bibinfo  {journal}
  {Soviet Math. Dokl.}\ }\textbf {\bibinfo {volume} {2}},\ \bibinfo {pages}
  {372} (\bibinfo {year} {1961})}\BibitemShut {NoStop}%
\bibitem [{\citenamefont {Albeverio}\ \emph {et~al.}(1988)\citenamefont
  {Albeverio}, \citenamefont {Gesztesy}, \citenamefont {H{\o}egh-Krohn},\ and\
  \citenamefont {Holden}}]{AlbeverioSolvabModelBook}%
  \BibitemOpen
  \bibfield  {author} {\bibinfo {author} {\bibfnamefont {S.}~\bibnamefont
  {Albeverio}}, \bibinfo {author} {\bibfnamefont {F.}~\bibnamefont {Gesztesy}},
  \bibinfo {author} {\bibfnamefont {R.}~\bibnamefont {H{\o}egh-Krohn}}, \ and\
  \bibinfo {author} {\bibfnamefont {H.}~\bibnamefont {Holden}},\ }\href@noop {}
  {\emph {\bibinfo {title} {Solvable Models in Quantum Mechanics}}},\ \bibinfo
  {edition} {2nd}\ ed.\ (\bibinfo  {publisher} {AMS Chelsea Publishing},\
  \bibinfo {address} {Providence, RI},\ \bibinfo {year} {1988})\BibitemShut
  {NoStop}%
\bibitem [{\citenamefont {Albeverio}\ \emph {et~al.}(1995)\citenamefont
  {Albeverio}, \citenamefont {Brze\'{z}niak},\ and\ \citenamefont
  {D\c{a}browski}}]{AlbeverioJFunctAnal}%
  \BibitemOpen
  \bibfield  {author} {\bibinfo {author} {\bibfnamefont {S.}~\bibnamefont
  {Albeverio}}, \bibinfo {author} {\bibfnamefont {Z.}~\bibnamefont
  {Brze\'{z}niak}}, \ and\ \bibinfo {author} {\bibfnamefont {L.}~\bibnamefont
  {D\c{a}browski}},\ }\href@noop {} {\bibfield  {journal} {\bibinfo  {journal}
  {J. Funct. Anal.}\ }\textbf {\bibinfo {volume} {130}},\ \bibinfo {pages}
  {220} (\bibinfo {year} {1995})}\BibitemShut {NoStop}%
\bibitem [{\citenamefont {Albeverio}\ and\ \citenamefont
  {Kurasov}(2000)}]{AlbeverioSingPerturbBook}%
  \BibitemOpen
  \bibfield  {author} {\bibinfo {author} {\bibfnamefont {S.}~\bibnamefont
  {Albeverio}}\ and\ \bibinfo {author} {\bibfnamefont {P.}~\bibnamefont
  {Kurasov}},\ }\href@noop {} {\emph {\bibinfo {title} {Singular Perturbations
  of Differential Operators}}},\ \bibinfo {series} {London Mathematical Society
  Lecture Note Series}\ No.\ \bibinfo {number} {271}\ (\bibinfo  {publisher}
  {Cambridge University Press},\ \bibinfo {address} {New York},\ \bibinfo
  {year} {2000})\BibitemShut {NoStop}%
\bibitem [{\citenamefont {Dell'Antonio}\ \emph {et~al.}(1998)\citenamefont
  {Dell'Antonio}, \citenamefont {Figari},\ and\ \citenamefont
  {Teta}}]{DellAntDiffMovPtSrc}%
  \BibitemOpen
  \bibfield  {author} {\bibinfo {author} {\bibfnamefont {G.~F.}\ \bibnamefont
  {Dell'Antonio}}, \bibinfo {author} {\bibfnamefont {R.}~\bibnamefont
  {Figari}}, \ and\ \bibinfo {author} {\bibfnamefont {A.}~\bibnamefont
  {Teta}},\ }\href@noop {} {\bibfield  {journal} {\bibinfo  {journal} {Annales
  de l'Institut Henri Poincar\'{e}}\ }\textbf {\bibinfo {volume} {69}},\
  \bibinfo {pages} {413} (\bibinfo {year} {1998})}\BibitemShut {NoStop}%
\bibitem [{\citenamefont {Carslaw}\ and\ \citenamefont
  {Jaeger}(1959)}]{carslaw59a}%
  \BibitemOpen
  \bibfield  {author} {\bibinfo {author} {\bibfnamefont {H.~S.}\ \bibnamefont
  {Carslaw}}\ and\ \bibinfo {author} {\bibfnamefont {J.~C.}\ \bibnamefont
  {Jaeger}},\ }\href@noop {} {\emph {\bibinfo {title} {Conduction of heat in
  solids}}},\ \bibinfo {edition} {2nd}\ ed.\ (\bibinfo  {publisher} {Clarendon
  Press},\ \bibinfo {address} {Oxford},\ \bibinfo {year} {1959})\BibitemShut
  {NoStop}%
\bibitem [{\citenamefont {K\"{u}hner}\ \emph {et~al.}(2004)\citenamefont
  {K\"{u}hner}, \citenamefont {Costa}, \citenamefont {Bisch}, \citenamefont
  {Thalhammer}, \citenamefont {Heckl},\ and\ \citenamefont
  {Gaub}}]{KuhnerLexADNABond}%
  \BibitemOpen
  \bibfield  {author} {\bibinfo {author} {\bibfnamefont {F.}~\bibnamefont
  {K\"{u}hner}}, \bibinfo {author} {\bibfnamefont {L.~T.}\ \bibnamefont
  {Costa}}, \bibinfo {author} {\bibfnamefont {P.~M.}\ \bibnamefont {Bisch}},
  \bibinfo {author} {\bibfnamefont {S.}~\bibnamefont {Thalhammer}}, \bibinfo
  {author} {\bibfnamefont {W.~M.}\ \bibnamefont {Heckl}}, \ and\ \bibinfo
  {author} {\bibfnamefont {H.~E.}\ \bibnamefont {Gaub}},\ }\href@noop {}
  {\bibfield  {journal} {\bibinfo  {journal} {Biophys. J.}\ }\textbf {\bibinfo
  {volume} {87}},\ \bibinfo {pages} {2683} (\bibinfo {year}
  {2004})}\BibitemShut {NoStop}%
\bibitem [{\citenamefont {Grossmann}\ and\ \citenamefont
  {Wu}(1984)}]{GrossmanFermippJMP84}%
  \BibitemOpen
  \bibfield  {author} {\bibinfo {author} {\bibfnamefont {A.}~\bibnamefont
  {Grossmann}}\ and\ \bibinfo {author} {\bibfnamefont {T.~T.}\ \bibnamefont
  {Wu}},\ }\href@noop {} {\bibfield  {journal} {\bibinfo  {journal} {J. Math.
  Phys.}\ }\textbf {\bibinfo {volume} {25}},\ \bibinfo {pages} {1742} (\bibinfo
  {year} {1984})}\BibitemShut {NoStop}%
\end{thebibliography}

\bibliographystyle{apsrev4-1.bst}

\end{document}